\documentclass[sigconf]{acmart}

\usepackage[utf8]{inputenc}
\usepackage[T1]{fontenc}

\usepackage{amssymb}
\usepackage{amsmath, amsfonts, amssymb, amsthm, bbm}

\usepackage{graphicx}
\usepackage{subcaption}
\usepackage{fontawesome5}
\usepackage{tikz}
\usetikzlibrary{fit, arrows.meta, positioning}
\usepackage{pgfplots}
\pgfplotsset{compat=1.18} %

\usepackage{booktabs}    %
\usepackage{multirow}
\usepackage{array}       %
\usepackage{natbib}
\newcolumntype{L}[1]{>{\raggedright\arraybackslash}p{#1}}

\usepackage{algorithm}
\usepackage{algpseudocode} %
\algrenewcommand\algorithmicrequire{\textbf{Input:}}
\algrenewcommand\algorithmicensure{\textbf{Output:}}

\usepackage{amsmath}
\usepackage{mdframed}

 \newbool{showcomments}
 \setbool{showcomments}{true}
 \ifthenelse{\boolean{showcomments}}
 { \newcommand{\mynote}[3]{
    \fbox{\bfseries\sffamily\scriptsize#1}{\small$\blacktriangleright$\textsf{\emph{\color{#3}{#2}}}$\blacktriangleleft$}
    }}
 { \newcommand{\mynote}[4]{}}

\newcommand{\done}[1]{%
  \ifthenelse{\boolean{showcomments}}%
    {{\color{blue}[DONE#1]}}%
    {}%
}

\newmdenv[
  linewidth=0.8pt,
  roundcorner=6pt,
  backgroundcolor=gray!5,
  linecolor=black,
  skipabove=10pt,
  skipbelow=10pt,
  innerleftmargin=10pt,
  innerrightmargin=10pt,
  innertopmargin=8pt,
  innerbottommargin=8pt
]{protocolbox}

\usepackage[shortlabels]{enumitem}

\usepackage{xcolor}
\usepackage{pifont}

\usepackage{geometry}
\usepackage{hyperref}

\usepackage{mdframed}
\usepackage{enumitem}

\floatstyle{plain}
\newfloat{funcfloat}{htbp}{ext}

\newcounter{functionalitycount}
\newenvironment{functionality}[1]{%
  \begin{funcfloat}[htbp]
  \refstepcounter{functionalitycount}%
  \begin{mdframed}[%
    linewidth=1pt, 
    innertopmargin=8pt, 
    innerbottommargin=8pt,
    innerleftmargin=8pt,   
    innerrightmargin=8pt   
  ]%
  \sloppy
  \noindent\textbf{Functionality \thefunctionalitycount} \textit{#1}\par\medskip%
}{%
  \end{mdframed}%
  \end{funcfloat}%
}

\floatstyle{plain}
\newfloat{simulfloat}{htbp}{ext}

\newcounter{simulatorcount}
\newenvironment{simulator}[1]{%
  \begin{funcfloat}[htbp]
  \refstepcounter{simulatorcount}%
  \begin{mdframed}[%
    linewidth=1pt, 
    innertopmargin=8pt, 
    innerbottommargin=8pt,
    innerleftmargin=8pt,   
    innerrightmargin=8pt   
  ]%
  \sloppy
  \noindent\textbf{Simulator \thesimulatorcount} \textit{#1}\par\medskip%
}{%
  \end{mdframed}%
  \end{funcfloat}%
}

\usepackage{tcolorbox}
\tcbuselibrary{skins} %

\newtcolorbox[auto counter]{myprotocol}[2][]{%
    float=htbp,
    enhanced,
    colback=white,        %
    colframe=black!80,    %
    arc=3mm,              %
    boxrule=1pt,          %
    fonttitle=\bfseries,  %
    title={Protocol \thetcbcounter: #2}, %
    #1
}

\definecolor{sageGreen}{rgb}{0.44, 0.5, 0.39}
\definecolor{beamerGreen}{rgb}{0.0, 0.6, 0.3}
\definecolor{lightbluegrey}{RGB}{176,196,222} 

\definecolor{sageGreen}{rgb}{0.44, 0.5, 0.39}
\definecolor{beamerGreen}{rgb}{0.0, 0.6, 0.3}
\definecolor{lightbluegrey}{RGB}{176,196,222} 

\definecolor{takeawayaccent}{HTML}{B8860B}
\definecolor{takeawaybg}{HTML}{FFF8E1}

\newenvironment{takeaway}{%
  \par\addvspace{\medskipamount}%
  \noindent
  \begin{minipage}{\linewidth}
  \begin{mdframed}[
    linecolor=takeawayaccent,
    linewidth=3pt,
    leftline=true,
    topline=false,
    bottomline=false,
    rightline=false,
    innerleftmargin=12pt,
    innerrightmargin=10pt,
    innertopmargin=8pt,
    innerbottommargin=8pt,
    backgroundcolor=takeawaybg,
    skipabove=6pt,
    skipbelow=6pt
  ]
  \color{black}%
}{%
  \end{mdframed}
  \end{minipage}
  \par\addvspace{\medskipamount}
}

\theoremstyle{plain}
\newtheorem{theorem}{Theorem}
\newtheorem{lemma}{Lemma}
\theoremstyle{definition}
\newtheorem{definition}{Definition}

\begin{document}

\title{\Giskard : Byzantine Robust and Confidential Aggregation for Large-Scale Decentralized Learning}

\author{Ousmane Touat}
\email{ousmane.touat@insa-lyon.fr}
\affiliation{%
\institution{INSA Lyon, LIRIS, CNRS}
\city{Villeurbanne}
\country{France}
}

\author{C\'esar Sabater}
\email{cesar.sabater@liris.cnrs.fr}
\affiliation{%
\institution{INSA Lyon, LIRIS, CNRS}
\city{Villeurbanne}
\country{France}
}

\author{Mohamed Maouche}
\email{mohamed.maouche@inria.fr}
\affiliation{%
\institution{INRIA, INSA Lyon}
\city{Villeurbanne}
\country{France}
}

\author{Sonia Ben Mokhtar}
\email{sonia.ben-mokhtar@cnrs.fr}
\affiliation{%
\institution{INSA Lyon, LIRIS, CNRS}
\city{Villeurbanne}
\country{France}
}

\begin{CCSXML}
<ccs2012>
   <concept>
       <concept_id>10010147.10010257</concept_id>
       <concept_desc>Computing methodologies~Machine learning</concept_desc>
       <concept_significance>500</concept_significance>
       </concept>
   <concept>
       <concept_id>10010147.10010919.10010172</concept_id>
       <concept_desc>Computing methodologies~Distributed algorithms</concept_desc>
       <concept_significance>300</concept_significance>
       </concept>
   <concept>
       <concept_id>10002978.10002979</concept_id>
       <concept_desc>Security and privacy~Cryptography</concept_desc>
       <concept_significance>300</concept_significance>
       </concept>
   <concept>
       <concept_id>10002978.10002991.10002995</concept_id>
       <concept_desc>Security and privacy~Privacy-preserving protocols</concept_desc>
       <concept_significance>300</concept_significance>
       </concept>
 </ccs2012>
\end{CCSXML}

\ccsdesc[500]{Computing methodologies~Machine learning}
\ccsdesc[300]{Computing methodologies~Distributed algorithms}
\ccsdesc[300]{Security and privacy~Cryptography}
\ccsdesc[500]{Security and privacy~Privacy-preserving protocols}

\newcommand{\spdz}{\textsc{SPDZ}}
\newcommand{\share}[1]{\langle #1 \rangle}
\newcommand{\mac}[1]{\gamma(#1)}
\newcommand{\field}{\mathbb{F}_p}
\newcommand{\parties}{\mathcal{P}}
\newcommand{\honest}{\mathcal{H}}
\newcommand{\adversaries}{\mathcal{A}}
\newcommand{\mainprotocol}{\Pi}
\newcommand{\cluster}{\mathcal{C}}
\newcommand{\committee}{\mathcal{K}}
\newcommand{\reveal}{\textsc{Reveal}}
\newcommand{\reshare}{\textsc{Reshare}}
\newcommand{\elect}{\textsc{Elect}}
\newcommand{\random}{\textsc{Random}}
\newcommand{\E}{\mathbb{E}}
\newcommand{\norm}[1]{\left\lVert#1\right\rVert}
\newcommand{\shareof}[1]{[\![ #1 ]\!]}

\newcommand{\modelVector}{\theta}
\newcommand{\modelVectorIt}[1]{\modelVector^{(#1)}}
\newcommand{\model}[2]{\modelVectorIt{#1}_{#2}}
\newcommand{\iter}{t}
\newcommand{\mixMatrix}{W}
\newcommand{\mixMatrixIt}[1]{\mixMatrix^{(#1)}}
\newcommand{\iterCnt}{T}
\newcommand{\mixMatrixItEl}[3]{\mixMatrixIt{#1}_{#2,#3}}
\newcommand{\oneVec}{\mathbbm{1}}
\newcommand{\modelSymb}{\theta} 
\newcommand{\globalModel}{\modelSymb^\star}
\newcommand{\graph}[1]{G^{(#1)}}
\newcommand{\edges}[1]{E^{(#1)}}
\newcommand{\iid}{i.i.d. }

\newcommand{\lTwoNorm}[1]{\|#1\|_2}
\newcommand{\secondEig}[1]{\lambda_2\left(#1\right)}
\newcommand{\mixSequence}{\mixMatrix^\star}

\newcommand{\dataset}[1]{\mathcal{D}_{#1}}
\newcommand{\globalDataset}{\mathcal{D}}
\newcommand{\loss}[3]{l_{#1}(#2,#3)}
\newcommand{\gradient}[3]{\nabla \loss{#1}{#2}{#3}}
\newcommand{\learningRate}{\eta}
\newcommand{\neighbors}[2]{\mathcal{N}^{(#1)}_{#2}}
\newcommand{\neighSize}{k}
\newcommand{\vect}{\modelVector}
\newcommand{\avgVec}{\tilde{\vect}}
\newcommand{\partySet}{\mathcal{P}}
\newcommand{\samo}{SAMO}
\newcommand{\tpratlowfpr}{TPR@1\%FPR}
\newcommand{\localUpdates}[4]{L_{#1}(#2,#3,#4)}
\newcommand{\peerTime}{p}
\newcommand{\incomingModels}[2]{\Theta^{#1}_{#2}}
\newcommand{\incomingModelsVar}[1]{\Theta_{#1}}
\newcommand{\modelASynch}[1]{\modelSymb_{#1}}
\newcommand{\sample}{x}
\newcommand{\sampleDom}{\mathcal{X}} 
\newcommand{\modelDom}{\mathcal{M}}
\newcommand{\partsOf}[1]{\wp(#1)}
\newcommand{\attribDom}{\mathcal{Z}}
\newcommand{\labelDom}{\mathcal{Y}}

\newcommand{\predictor}{P}
\newcommand{\mpEntropy}[2]{M_{\text{MPE}}(#1,#2)}
\newcommand{\attack}[2]{\attackSymb(#1,#2)}
\newcommand{\modelPredDist}[2]{#1[#2]}
\newcommand{\attackSymb}{\mathcal{A}}
\newcommand{\attackThreshold}{\tilde{\tau}}
\newcommand{\attackMPESymb}{\attackSymb_{\text{MPE}}}
\newcommand{\attackMPE}[3]{\attackMPESymb(#1,#2,#3)}
\newcommand{\testDataSet}[1]{\dataset{#1}^{t}}
\newcommand{\secparam}{\lambda}

\newcommand{\revision}[1]{\textcolor{black}{#1}}

\newcommand{\crs}{\mathsf{crs}}
\newcommand{\negl}{\mathsf{negl}}

\newcommand{\COM}{\mathsf{COM}}
\newcommand{\RenewShares}{\textsc{Renew-Shares}}
\newcommand{\GenRand}{\textsc{Gen-Rand}}
\newcommand{\BuildTree}{\textsc{BuildTree}}
\newcommand{\VSSShare}{\textsc{VSS-Share}}
\newcommand{\VSSSubShare}{\textsc{VSS-SubShare}}
\newcommand{\VSSReconst}{\textsc{VSS-Reconst}}
\newcommand{\VerifyEval}{\textbf{VerifyEval}}
\newcommand{\VShare}{\textsc{VShare}}
\newcommand{\cm}{\mathsf{cm}}              %
\newcommand{\agg}{\mathrm{agg}}            %
\newcommand{\ch}{\mathrm{ch}}              %
\newcommand{\pa}{\mathrm{pa}}              %
\newcommand{\cnt}{\mathrm{count}}          %
\newcommand{\Left}{\mathit{left}}          %
\newcommand{\Right}{\mathit{right}}        %
\newcommand{\Reconst}{\textsc{Reconst}}    %
\newcommand{\VSetup}{\textsc{VSetup}}      %
\newcommand{\Giskard}{\textsc{Giskard}}
\newcommand{\VerAgg}{\textsc{VerAgg}}
\newcommand{\Multiply}{\textsc{Multiply}}
\newcommand{\Reshare}{\textsc{Reshare}}

\newcommand{\mpcInp}[1]{x_{#1}}
\newcommand{\mpcFSymb}{f}
\newcommand{\mpcFunc}[1]{\mpcFSymb_{#1}}
\newcommand{\mpcFuncGlobal}{\mpcFSymb}
\newcommand{\mpcInpDom}{\mathbb{I}}
\newcommand{\mpcOutDom}{\mathbb{O}}
\newcommand{\tinylock}{%
  \begin{tikzpicture}[scale=0.12, baseline=-0.5ex]
    \draw[thick] (0,0) rectangle (1,0.8);
    \draw[thick] (0.2,0.8) -- (0.2,1.2) arc(180:0:0.3) -- (0.8,0.8);
  \end{tikzpicture}%
}

\keywords{Decentralized learning, Byzantine robustness, Secure multiparty computation, Privacy-preserving aggregation, Distributed machine learning}

\begin{abstract}

Dealing simultaneously with confidentiality and Byzantine behaviors in decentralized learning is a challenging problem. 
Indeed, in decentralized learning, clients train a machine learning model while keeping their data locally and share their model parameters or gradients with a set of neighbors. While enforcing confidentiality calls for hiding the exchanged model parameters/gradients (e.g., by using cryptographic techniques), dealing with Byzantine contributions often requires inspecting the latter. 
Hence, most research works address these objectives separately. A recent line of work proposes to employ secure multi-party computation (MPC) to implement robust aggregators against model poisoning, thereby enforcing both confidentiality and Byzantine resilience. However, these solutions scale badly: they either require all-to-all communication between participants or delegate the entire computation to a small subset, whose computational and communication load grows proportionally with the size of the network.

In this paper, we present Giskard, a protocol for confidential and Byzantine-robust decentralized aggregation. Giskard organizes $n$ parties into a tree of committees of size $O(\log n)$  and evaluates a coordinate-wise approximate median via a committee-adapted distributed binary search over the value domain, using BGW-style MPC within each committee. We assess Giskard both theoretically by proving its security and confidentiality properties and experimentally through extensive experiments involving up to one million participants. Compared to its closest competitors, Giskard reduces per-party communication complexity asymptotically while exhibiting comparable model utility under up to $n/4$ Byzantine parties.

\end{abstract}

\maketitle

\section{Introduction}
Federated learning (FL) was initially promoted as a privacy-preserving alternative to centralized training, since raw data remains on users' devices~\cite{fedavg}. 
However, subsequent work has shown that FL is vulnerable to a wide range of attacks, many of which exploit the privileged role of the aggregation server~\cite{mothukuri2021survey}.
This centralization also creates a governance bottleneck: the entity operating the server retains effective control over the learning process, including the choice of the training objective, model architecture, aggregation rule.
As a result, FL may decentralize data storage, but it does not decentralize decision-making. From this perspective, decentralized learning, where participants interact directly in a peer-to-peer manner~\cite{hegedHus2019gossip,lian2017} appears to be an appealing alternative as it eliminates the need for a trusted central all-mighty orchestrator while preserving the convergence properties of centralized training~\cite{alghunaim2022unified}.

Nevertheless, decentralized learning systems face several challenges that hinder their wider adoption. First, these systems are inherently vulnerable to Byzantine participants, i.e., malicious or faulty nodes that submit arbitrary, manipulated, or erroneous model updates with the intent to corrupt the global model~\cite{fang2024byzantine}. Unlike in FL, where the server can act as a gatekeeper (albeit a trusted and potentially compromised one), the absence of a central authority in decentralized settings makes model poisoning attacks harder to detect and mitigate.

An orthogonal yet equally critical challenge concerns the confidentiality of participants' contributions. Even in the absence of a central server, nodes must exchange model updates (gradients or parameter vectors) that can leak sensitive information about their training data. Examples of such attacks include membership inference attacks~\cite{melis2019exploiting,nasr2019comprehensive,Zhang2020}, where an adversary determines whether a particular record was used during training and more critically, gradient inversion attacks~\cite{huang2021evaluating}, whereby an adversary reconstructs raw training samples with high fidelity directly from the exchanged gradients. 

These two challenges have been extensively studied in isolation, each giving rise to its own line of defenses. On the one hand, robust aggregation functions have been devised to deal with Byzantine contributions. These aggregation functions are designed to limit the influence of malicious updates by filtering outliers or computing statistics robust to a fraction of corrupted inputs, such as coordinate-wise Median, Trimmed Mean, or Krum~\cite{bucketing,allouah_fixing_2023}. These methods offer provable robustness guarantees up to a bounded fraction of Byzantine participants. On the other hand, protecting participants against inference attacks has been approached through two main strategies: differential privacy~\cite{wei2020federated}, which introduces calibrated noise to bound what an adversary can infer, or cryptographic techniques such as Secure Aggregation~\cite{mansouri2023sok} or homomorphic encryption~\cite{zhang2020batchcrypt}, which ensure that individual updates are never exposed in the clear.

However, it is well known that these two challenges are fundamentally in tension: Byzantine-resilient aggregation functions require inspecting and comparing individual contributions, while confidentiality mechanisms are designed to hide them. This conflict limits the applicability of naive combinations. Approaches that combine robust aggregators with DP mechanisms face impossibility results that bound the simultaneous achievability of privacy, robustness, and utility~\cite{allouah_privacy-robustness-utility_2023}. Alternatively, implementing robust aggregation inside cryptographic protocols (using MPC or homomorphic computation) is a natural direction, but incurs a heavy computational and communication cost: aggregation rules such as Trimmed Mean or coordinate-wise Median inherently require sorting and comparison operations, which are notoriously expensive to implement in encrypted form.

Prior work has explored MPC-based robust aggregation in federated settings~\cite{choffrut_sable,cryptoeprint:2025/1289,elsa2023,lycklama2023rofl}, where a central coordinator can structure the computation. Extending these guarantees to fully decentralized systems is considerably harder. Existing decentralized proposals~\cite{francez,ghavamipour_privacy-preserving_2024} involve either all nodes or a randomly elected committee within a shared MPC instance to perform robust aggregation. These designs yield all-to-all or all-to-committee communication topologies, where the per-node communication load grows at least linearly with the number of participants. This constitutes a fundamental scalability bottleneck that renders them impractical for large-scale deployments.

\textbf{Our work: } In this paper, we aim at bridging this gap by proposing $\Giskard$, a practical, robust and confidential decentralized learning system. To reach this objective we build on scalable MPC systems designed for the malicious setting~\cite{millionsMPC,SecureShuffling}, which distribute secure computation across small, overlapping committees to achieve sub-linear per-party communication. A key technical insight of our work is a new formulation of coordinate-wise median as a binary-search problem: instead of evaluating the median through generic sorting or comparison circuits, we reduce robust aggregation to a sequence of secure counting operations. This insight allows us to specialize the committee-based architecture into a lightweight hierarchical aggregation protocol that is both communication-efficient and cryptographically strong while being robust to Byzantine contributions. As a result, $\Giskard$ is a scalable, load-balanced, UC-secure decentralized learning system that tolerates up to $n/4$ byzantine nodes via a hierarchical aggregation tree, matching the robustness threshold of coordinate-wise median.

Our contributions are summarized as follows:
\begin{itemize}
    \item We propose $\Giskard$, a novel robust and confidential decentralized learning protocol that scales up to millions of parties. 
    \item We theoretically prove that $\Giskard$ is UC-secure.
    \item We analyze $\Giskard$'s communication cost and show that the worst per-party communication is polylogarithmic in $n$ while its closest competitors have worst per-party communication complexity in the order of $O(n \log^2 n)$.
    \item We experimentally evaluate $\Giskard$ on MNIST and CIFAR-10 and at network sizes up to $n = 10^6$. $\Giskard$ reduces worst-case per-party communication by three orders of magnitude over the closest competitor, while its robust aggregation function matches the utility of formally robust aggregators. The code is publicly available\footnote{\url{https://anonymous.4open.science/r/Giskard-ByzRobustConfidentialDL-B3E9}}.
\end{itemize}

This paper is structured as follows. We first present the collaborative 
learning setting, threat model, design goals, and technical preliminaries 
in Section~\ref{subsec:preliminaries}. We then present the full $\Giskard$ 
protocol, its security and robustness analyses, and its communication cost 
in Section~\ref{sec:protocol}. We empirically validate our theoretical 
analysis in Section~\ref{sec:exp}. Finally, we discuss related work in 
Section~\ref{sec:related} before concluding in Section~\ref{sec:conclusion}.

\section{Background and Model}
\label{subsec:preliminaries}

This section introduces the setting in which \Giskard{} operates and the components on which it is built. We first formalize the collaborative learning problem (\S\ref{subsec:collab-learning}), describe our system and threat models (\S\ref{subsec:system-model}--\S\ref{subsec:threatmodel}), and the design goals that \Giskard{} must satisfy (\S\ref{subsec:design-goals}). We then recall the three technical components our construction relies on, that are robust aggregation (\S\ref{subsec:robust-agg}), secure multi-party computation (\S\ref{subsec:mpc}) and committee formation (\S\ref{subsec:committees}).

\subsection{Collaborative Learning}
\label{subsec:collab-learning}

Collaborative learning is a category of distributed learning where the objective is to learn a statistical/learning model from independent computing units, each holding its own local data. In particular, here we consider a set of parties  $\partySet = \{1, \dots, n\}$, each holding a local data set $\dataset{i}$ , whose common objective is to learn a shared model $\globalModel$ that minimizes the following loss function:

\begin{equation}
	\mathcal{L}(\globalModel) = \frac{1}{|\mathcal{D}|}  \sum \limits_{\underset{}{i = 1}}^{n} \loss{i}{\globalModel}{\dataset{i}},
\end{equation}

where $\loss{i}$ is the local loss function of party $i$ and $\globalDataset = \bigcup_{i \in \partySet} \dataset{i}$ as defined in~\citep{bellet2017fast}. 

There are different ways to implement this framework, for example in federated learning~\cite{fedavg}, where there is a server whose role is to aggregate the models trained locally by the parties in the system, via an averaging function.

In any case, one might wonder what would happen if, in such a system, parties began to deviate from the given protocol or send corrupted models.

\subsection{System and Network Model}
\label{subsec:system-model}

We model our distributed learning system as a set of $n$ parties, $\parties = \{P_1, \dots, P_n\}$.
The parties jointly execute a prescribed protocol $\mainprotocol$ to collaboratively learn a global model.
We partition $\parties$ into two disjoint sets: the honest parties $\honest$, which strictly follow $\mainprotocol$, and the adversarial parties $\adversaries$ (also called Byzantine parties), with $\honest \cup \adversaries = \parties$ and $\honest \cap \adversaries = \emptyset$.
We assume a synchronous communication network in which parties interact in rounds, and a public-key infrastructure (PKI) that authenticates point-to-point channels; TLS provides channel confidentiality.

\subsection{Threat Model}
\label{subsec:threatmodel}

We consider a static, computationally bounded adversary $\adversaries$ who corrupts a set of parties of size $f = |\adversaries|$ prior to protocol execution.
The adversarial objective is to undermine the \textbf{confidentiality} and \textbf{correctness} of the decentralized training process. 
The threat model is independent of the underlying ML algorithm: $\Giskard$ inherits its robustness guarantees directly from formalized distributed learning robustness guarantees (see~\cite{allouah_fixing_2023, karimireddy_byzantine-robust_2023,krum}), which apply to any distributed gradient-based optimizer regardless of model architecture or task. 
The confidentiality guarantees apply to any context: they protect any deployment in which local model updates must remain hidden from external parties. 

\paragraph{Attacker capabilities.}
$\Giskard$ operates under a hybrid threat model in which corrupted parties may deviate along two axes:
\begin{itemize}
\item \emph{Protocol deviation.} Corrupted parties behave arbitrarily: they may deviate from $\mainprotocol$ by sending malformed shares, colluding to reconstruct shared secrets, or attempting to abort the protocol.
\item \emph{Model poisoning.} Even when following $\mainprotocol$ faithfully, corrupted parties may submit adversarially crafted model updates for aggregation.
\end{itemize}
This model captures both coordinated attacks and non-malicious failures such as parties submitting corrupted or missing values due to technical faults.

\paragraph{Corruption bound.}
We require $f < n/4$, a corruption fraction realistic for practical large-scale decentralized deployment.
Two independent constraints pin this bound: a per-committee threshold $\tau < m/4$ on our committee-level protocol, and the breakdown-point analysis of our robust aggregator. Both constraints translate to $f < n/4$ globally as we derive them 
in Section~\ref{sec:protocol}.

\subsection{Design Goals}
\label{subsec:design-goals}
Our primary objective is to ensure secure and reliable decentralized learning even when a subset of parties acts maliciously.
Specifically, $\Giskard$ is designed to enforce the following four goals:

\textbf{(1) Byzantine Robustness.}
The global model must remain close to the aggregate of honest participants' updates, even under arbitrary or coordinated model and data poisoning attacks.

\textbf{(2) Confidentiality.}
Individual updates must remain hidden throughout the protocol execution. In particular, no party or coalition of corrupted parties should be able to reconstruct, distinguish, or infer any participant's contribution beyond what is revealed by the aggregate output.

\textbf{(3) Correctness and Verifiability.}
Any deviation from the prescribed protocol must either be detected and rejected or have a provably bounded influence on the final output. This ensures that neither faulty nor malicious participants can silently bias the outcome of the aggregation.

\textbf{(4) Scalability.}
The per-party communication overhead must grow sublinearly in the total number of participants, ensuring that the protocol remains practical at cross-device scale (concretely, for deployments involving $n \geq 10^4$).

\subsection{Model Poisoning Attacks \& Defense}
\label{subsec:robust-agg}

\subsubsection{Threat to Collaborative Learning Model}

While averaging is the most common aggregation policy in collaborative learning, it is also the most vulnerable to cases where parties fail to send the prescribed model update. Such failures include parties that fail to respond, have their updates corrupted by faults, or adversarially submit crafted updates that distort the aggregate. We refer to this class of attacks as model poisoning attacks~\cite{xie_fall_2019,baruch_little_2019}.

\subsubsection{Defense against Model Poisoning Attacks}

The literature gives a wide range of defenses against model poisoning. A first class restricts client updates to a bounded domain, preventing deviations far from the mean~\cite{cclip}. A second class deterministically detects and excludes corrupted updates from aggregation~\cite{caoFLTrustByzantinerobustFederated2021}. 

A third class replaces averaging with robust statistical functions that tolerate a fraction of corrupted inputs. We focus on this third class, which is supported by the theoretical framework described below.

\begin{definition}{$\left(f,\kappa \right)$-Byzantine robustness ~\cite{allouah_fixing_2023}}
\label{def:robustagg}
  Let $f=\rho n$ with $\rho \le \frac{1}{2}$ and $\kappa \geq 0$, the aggregation rule $F$ is $\left(f,\kappa \right)$-Byzantine robust if for any $(x_1, \dots, x_n) \in \mathbb{R}^N$ and any subset $S \subseteq [n]$ of size n-f the output of F is bounded as follow : 
\begin{equation}
    \|F(x_1, \dots, x_n) - \overline{x_S}\| \leq \frac{\kappa}{|S|} \sum_{i \in S} \| x_i - \overline{x_S}\|^{2}
\end{equation}
where $\overline{x_S} = \frac{1}{|S|} \sum_{i \in S} x_i$.
\end{definition}

Such an aggregator's output cannot deviate from the average of honest parties regardless of adversarial input, including against omniscient adversaries who can craft vectors from full knowledge of the honest contributions. Assuming standard regularity conditions on the loss function and honest-gradient bounded heterogeneity, a $(f,\kappa)$-Byzantine robust aggregation rule yields the following convergence guarantee: 

\begin{theorem}[\citet{allouah_fixing_2023}]
  \label{theorem:robustlearning}
Assuming that $\mathcal{L}$ is L-smooth, a distributed gradient descent using a $\left(f,\kappa \right)$-Byzantine robust aggregation rule give the following : 

\[
\|\nabla \mathcal{L}(\globalModel) \|^{2} \leq 4\kappa \zeta + \frac{4(\mathcal{L}(\globalModel_0) - \mathcal{L}_*) }{T}
\]
with $\zeta$ the variance of honest gradients, T the current iteration of D-GD, and $\mathcal{L}_*$ the optimal loss value.
\end{theorem}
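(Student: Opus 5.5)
We prove Theorem~\ref{theorem:robustlearning} with the standard descent-lemma argument for robust distributed gradient descent, as in \citet{allouah_fixing_2023}. Write $\theta_t$ for the iterate at step $t$, $S$ for the set of honest parties with $|S| = n-f$, and $\overline{g}_t = \frac{1}{|S|}\sum_{i\in S} \nabla \loss{i}{\theta_t}{\dataset{i}}$ for the average honest gradient. We take $\overline{g}_t$ to coincide with $\nabla\mathcal{L}(\theta_t)$ for the honest objective. The update is $\theta_{t+1} = \theta_t - \learningRate R_t$, where $R_t = F(g_1^t,\dots,g_n^t)$. We interpret $\zeta$ as a uniform bound on the honest heterogeneity, $\frac{1}{|S|}\sum_{i\in S}\|g_i^t - \overline{g}_t\|^2 \le \zeta$.

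Definition~\ref{def:robustagg} is written with a non-squared left-hand side. For the argument we use the squared form $\|R_t - \overline{g}_t\|^2 \le \frac{\kappa}{|S|}\sum_{i\in S}\|g_i^t-\overline{g}_t\|^2$, which is the form intended in \cite{allouah_fixing_2023}. Combined with the heterogeneity assumption, this gives
\[
\|R_t - \overline{g}_t\|^2 \le \kappa\zeta \quad \text{for every } t.
\]

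\textbf{Step 1 (descent lemma).} By $L$-smoothness,
\[
\mathcal{L}(\theta_{t+1}) \le \mathcal{L}(\theta_t) - \learningRate\langle \nabla\mathcal{L}(\theta_t), R_t\rangle + \tfrac{L\learningRate^2}{2}\|R_t\|^2 .
\]
Decompose $R_t = \nabla\mathcal{L}(\theta_t) + e_t$ with $e_t = R_t - \overline{g}_t$. For the inner product, use $-\langle a, a+e\rangle \le -\tfrac12\|a\|^2 + \tfrac12\|e\|^2$. For the quadratic term, use $\|a+e\|^2 \le 2\|a\|^2 + 2\|e\|^2$.

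\textbf{Step 2 (step size).} Choose $\learningRate = 1/(4L)$, or any value with $L\learningRate \le 1/4$. Then
\[
\mathcal{L}(\theta_{t+1}) \le \mathcal{L}(\theta_t) - \tfrac{\learningRate}{4}\|\nabla\mathcal{L}(\theta_t)\|^2 + \learningRate\|e_t\|^2 .
\]

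\textbf{Step 3 (telescoping).} Bound $\|e_t\|^2 \le \kappa\zeta$, sum over $t=0,\dots,T-1$, and use $\mathcal{L}(\theta_T) \ge \mathcal{L}_*$. This yields
\[
\frac{1}{T}\sum_{t<T}\|\nabla\mathcal{L}(\theta_t)\|^2 \le 4\kappa\zeta + \frac{4(\mathcal{L}(\theta_0)-\mathcal{L}_*)}{\learningRate T}.
\]

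\textbf{Step 4 (output).} The stated bound refers to $\globalModel$, which we take to be the iterate that minimizes the gradient norm, or equivalently a uniformly random one among the first $T$. Its squared gradient norm is at most the average above. The constant in front of the $1/T$ term absorbs $1/\learningRate$ under the paper's normalization, which is effectively $L$-scaling or $\learningRate = 1$ units.

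The main obstacle is bookkeeping of constants: obtaining exactly $4\kappa\zeta$ and $4/T$ requires tuning the Young-inequality weights in Step~1 against the step size. The mismatched norm in Definition~\ref{def:robustagg} must also be handled, and we resolve it by adopting the squared reading above. If the constants do not close, we would first try weights $\tfrac12$ in the inner-product bound with $\learningRate = 1/L$, and rescale $\mathcal{L}$ so that $L=1$.
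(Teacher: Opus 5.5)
The paper gives no proof of this statement; it quotes it from \citet{allouah_fixing_2023}, and your descent-lemma argument is the standard route for such results. Two of your readings are necessary and sound. First, you use Definition~\ref{def:robustagg} in squared form, which is also the form the paper's own proof of Theorem~\ref{thm:scalar-robustness} establishes. Second, you take $\mathcal{L}$ to be the honest-average loss, whereas the paper's $\mathcal{L}$ sums over all parties. Steps~1--3 are correct: with $L\learningRate\le 1/4$ they give $\frac{1}{T}\sum_{t<T}\|\nabla\mathcal{L}(\theta_t)\|^2 \le 4\kappa\zeta + 4(\mathcal{L}(\theta_0)-\mathcal{L}_*)/(\learningRate T)$.

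The gap is Step~4, and it cannot be closed as written. Since $\learningRate\le 1/(4L)$, your second term is at least $16L(\mathcal{L}(\theta_0)-\mathcal{L}_*)/T$, and no normalization removes the $L$. Replacing $\mathcal{L}$ by $c\mathcal{L}$ multiplies $L$ by $c$ and $\zeta$ and $\|\nabla\mathcal{L}\|^2$ by $c^2$, but multiplies $\mathcal{L}(\theta_0)-\mathcal{L}_*$ only by $c$. So proving the bound at $L=1$ and scaling back reinstates a factor $L$ in front of $1/T$.

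In fact, the displayed inequality is false for large $L$ under your own reading of $\globalModel$. Take $\zeta=0$, $T=1$ and $\mathcal{L}(\theta)=\tfrac{L}{2}\|\theta\|^2$. Then $\globalModel=\theta_0$ and $\|\nabla\mathcal{L}(\theta_0)\|^2 = 2L(\mathcal{L}(\theta_0)-\mathcal{L}_*)$, which exceeds $4(\mathcal{L}(\theta_0)-\mathcal{L}_*)$ once $L>2$.

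Your fallback also fails. With $\learningRate=1/L$, your two Young-type inequalities leave the coefficient $\learningRate(\tfrac12-L\learningRate)=-\tfrac{1}{2L}$ on $\|\nabla\mathcal{L}(\theta_t)\|^2$, so there is no descent.

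The fix is to take $\learningRate=1/L$ and use the exact identity $-\langle a,a+e\rangle+\tfrac12\|a+e\|^2=\tfrac12\|e\|^2-\tfrac12\|a\|^2$ instead of two separate inequalities. This gives $\mathcal{L}(\theta_{t+1})\le\mathcal{L}(\theta_t)-\tfrac{1}{2L}\|\nabla\mathcal{L}(\theta_t)\|^2+\tfrac{1}{2L}\kappa\zeta$, hence
\[
\min_{t<T}\|\nabla\mathcal{L}(\theta_t)\|^2 \le \kappa\zeta + \frac{2L(\mathcal{L}(\theta_0)-\mathcal{L}_*)}{T}.
\]
This implies the paper's displayed bound whenever $L\le 2$. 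You should state the result in this $L$-dependent form, or make a normalization such as $L\le 2$ an explicit hypothesis, rather than claim the $L$-free version.
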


Examples of $(f, \kappa)$-Byzantine robust aggregators include coordinate-wise estimators (trimmed mean and median~\cite{trimmedmeans}) and geometry-based aggregators (Krum and Multi-Krum~\cite{trimmedmeans}, Geometric Median~\cite{rfa}).

\subsection{Secure MPC Primitives}
\label{subsec:mpc}

Our construction relies on secure multi-party computation (MPC)~\cite{yao1982protocols}: a set of parties $P = \{1,\dots,n\}$, each holding a private input $\mpcInp{i} \in \mpcInpDom$, jointly compute a public function $\mpcFuncGlobal : \mpcInpDom^n \to \mpcOutDom^n$ such that party $i$ learns only its own output $\mpcFunc{i}((\mpcInp{j})_{j=1}^n) \in \mpcOutDom$. We instantiate this with secret-sharing-based MPC in the honest-majority setting, building on Shamir's scheme and the BGW verifiable variant. We recall both schemes below.

\subsubsection{Secret Sharing}
\newcommand{\fieldsize}{q}
Let $\mathbb{F}_\fieldsize$ be a field of size at least $\fieldsize > n$.  
A $(t, n)$-secret sharing scheme allows a party, known as a \emph{dealer}, to distribute a secret $s \in \mathbb{F}_\fieldsize$, among $n$ parties such that any $t$ parties can reconstruct the secret, while any subset of $t-1$ parties learns nothing. Most such schemes derive from Shamir's construction~\cite{shamirSecret}: the dealer samples a polynomial $f(x) \in \mathbb{F}_\fieldsize[x]$ of degree $t-1$ with $f(0) = s$, and distributes the evaluation $s_i = f(i)$ to party $P_i$. Reconstruction proceeds by Lagrange interpolation over any authorized subset $\mathcal{S}$ with $|\mathcal{S}| \geq t$. A key property we exploit is linearity: given shares of $a$ and $b$, a share of $a+b$ is obtained by locally summing the two shares, with no further interaction between MPC parties.

\subsubsection{BGW VSS scheme}
Verifiable Secret Sharing (VSS) extends Shamir sharing to malicious dealers and parties: honest parties can reject inconsistent shares or reconstruct the secret despite malformed ones. We use the BGW 
scheme~\cite{bgw88,bgwSubshare} and rely on three of its protocols, sketched below. 

\paragraph{\VSSShare.}
The dealer samples a random bivariate polynomial $B(x,y) \in \mathbb{F}_p[x,y]$ of degree $t-1$ with $B(0,0)=s$ and sends $B(x,i)$, $B(i,y)$ to each party $P_i$. A pairwise consistency check then guarantees that every honest party either holds a valid share of $s$ on a polynomial of degree $t-1$ , or the dealer is 
disqualified.

\paragraph{\Multiply.}
Given shares of $a$ and $b$ on polynomials of degree $t-1$, parties first multiply locally, obtaining a share of $a \cdot b$ on a degree-$2(t-1)$ polynomial. They then verifiably subshare the products and evaluate a fixed linear combination, yielding a fresh degree-$(t-1)$ share of $a \cdot b$.

\paragraph{\Reconst.}
Each party broadcasts its share to the reconstructing set. Reed--Solomon 
decoding over $n \geq 3t+1$ points recovers $s$, tolerating up to $t$ 
corrupted or missing shares.

\begin{theorem}[\citet{bgwSubshare}]
\label{thm:bgw}
The BGW protocol is unconditionally secure against a static malicious adversary corrupting up to $t < n/3$ parties, with a straight-line black-box simulator.
\end{theorem}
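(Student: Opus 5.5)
Theorem~\ref{thm:bgw} is a cited result (BGW with the subshare-based multiplication of \citet{bgwSubshare}), so the plan is to reproduce the standard simulation argument in the UC/stand-alone framework with a straight-line, black-box simulator. We fix a static adversary corrupting a set $I$ with $|I| = t < n/3$, and we work over $\mathbb{F}_p$ with $p > n$. The argument is modular. We first prove that each sub-protocol (\VSSShare, \Multiply, \Reconst) securely realizes its ideal functionality: $F_{\mathrm{VSS}}$ for sharing a degree-$(t-1)$ polynomial, $F_{\mathrm{mult}}$ for fresh sharing of a product, and reconstruction with robust output. The full protocol for an arithmetic circuit then follows by sequential composition in the $(F_{\mathrm{VSS}},F_{\mathrm{mult}})$-hybrid model.

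For \VSSShare{} with an honest dealer, the simulator hands the corrupted parties $B(x,i),B(i,y)$ for $i\in I$ sampled from a bivariate polynomial with a random (dummy) constant term. We then use the standard fact that the evaluations of a random degree-$(t-1)$ bivariate polynomial on $t-1$ rows and columns are independent of $B(0,0)$. Strictly speaking this needs the degree/threshold convention to align with $|I|$, and we adjust indices so that $|I|$ is at most the degree. The complaint phase is simulated exactly, because honest parties never complain against each other.

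With a corrupted dealer, the simulator runs the honest parties' consistency checks itself. If the honest parties accept, the honest rows, of which there are at least $n-t \ge 2t+1$, determine a unique bivariate polynomial, and the simulator extracts $B(0,0)$ and sends it to the ideal functionality. If they reject, it sends a default input. Correctness of this extraction is the Reed--Solomon uniqueness argument, using $n \ge 3t+1$.

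For \Reconst, Reed--Solomon decoding corrects up to $t$ errors given $n\ge 3t+1$ evaluations of a degree-$(t-1)$ polynomial. The simulator learns the output from the functionality and patches the honest parties' shares to be consistent with it: it interpolates the unique polynomial through the corrupted shares and the output.

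For \Multiply, we follow \citet{bgwSubshare}. Each party subshares its degree-$2(t-1)$ product share via \VSSSubShare{} and proves in zero knowledge, via the degree-reduction check, that the subshared value is the product of its two shares. The parties then apply the public Lagrange linear combination. Simulation is straightforward in the hybrid model: corrupted parties' shares are known to the simulator, honest inputs are random polynomials, and a cheating party is detected and its share reconstructed publicly. This reveals nothing, because only corrupted parties' shares are exposed.

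The main obstacle is the multiplication step, and specifically showing that the verification that a dealer subshared exactly the product $a_i b_i$ leaks nothing and is sound against $t<n/3$ corruptions. The obstacle arises because the degree-$2(t-1)$ intermediate requires $2(t-1)<n-t$ honest points for error correction. Here I would first try to reduce the product check to a degree check on a polynomial $C(x)=A(x)B(x)$, verified by evaluating it at all points via error-correcting decoding. That check yields an independent simulator per step, and straight-line black-box simulation then follows by composing the per-step simulators, none of which rewinds.
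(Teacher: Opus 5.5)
The paper does not prove this statement. It imports it from Asharov--Lindell and only uses it, with Kushilevitz--Lindell--Rabin, to get UC security. Your modular outline (VSS, reconstruction and multiplication in a hybrid model, with straight-line simulators that compose) matches their proof. However, the step you flag as the main obstacle is not closed, and your proposed fix fails in the stated regime $t<n/3$.

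\textbf{Why decoding the product fails.} The inequality $2(t-1)<n-t$ is the condition for \emph{interpolating} the product polynomial from points known to be correct, i.e.\ the semi-honest degree reduction. Error-correcting decoding against $t$ adversarial shares needs $n\ge \deg+2t+1$. With the convention privacy forces on you (sharing degree $t$, $|I|\le t$), $A(x)B(x)$ has degree $2t$, so decoding needs $n\ge 4t+1$; in your original indexing it needs $n\ge 4t-1$. Neither follows from $n\ge 3t+1$.

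At $n=3t+1$, evaluations of degree-$2t$ polynomials form a Reed--Solomon code of distance only $t+1$, so a wrong product share can be detected but not located. Concretely, the adversary can add to its $t$ product shares the restriction to $I$ of a weight-$2t$ codeword supported on $I\cup J$, where $J$ is a set of $t$ honest parties. This has the same syndrome as an error on $J$, so honest parties cannot tell who cheated. BGW guarantees output delivery, so detection without identification is useless.

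The same objection applies to running \textsc{VSS-SubShare} directly on the degree-$2t$ product shares. That is exactly the $t<n/4$ simplification of Asharov--Lindell's Appendix~A. It is the regime Giskard actually instantiates ($\tau<m/4$), but your argument proves only that case, not the theorem as stated.

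\textbf{The missing idea} is BGW's per-dealer product proof ($F_{\mathrm{mult}}^{\mathrm{VSS}}$ in Asharov--Lindell).
\begin{itemize}
\item Subshare only the degree-$t$ shares $a_i,b_i$, which are decodable since $n\ge 3t+1$. This gives degree-$t$ polynomials $A_i,B_i$ with $A_i(0)=a_i$ and $B_i(0)=b_i$.
\item Dealer $P_i$ VSS-shares degree-$t$ polynomials $D_1,\dots,D_t$. Their lower coefficients are random, and their leading coefficients are fixed from $\ell=t$ downwards so that the terms of degree $t+1,\dots,2t$ cancel in $C_i(x)=A_i(x)B_i(x)-\sum_{\ell=1}^{t}x^{\ell}D_\ell(x)$.
\item $P_i$ also VSS-shares $C_i$, which then has degree $t$ and $C_i(0)=a_ib_i$.
\item Each $P_j$ checks the identity at $\alpha_j$. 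A complaint is resolved by publicly evaluating the shared polynomials at $\alpha_j$, which exposes only the complainer's own values or a corrupt dealer's data. A caught dealer has $a_i,b_i$ opened publicly.
\end{itemize}
Both sides of the identity have degree at most $2t$ and agree on the $n-t\ge 2t+1$ honest points. Hence $C_i(0)=a_ib_i$ follows by interpolation, with no decoding of a degree-$2t$ word.

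The simulator then needs a separate lemma: the corrupted parties' evaluations of $D_1,\dots,D_t,C_i$ are uniform and independent of $(a_i,b_i)$. Only after that does the public Lagrange combination of the verified degree-$t$ sharings of $c_1,\dots,c_n$ give a degree-$t$ sharing of $ab$.

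\textbf{A smaller fix in the VSS step.} With a corrupt dealer, your simulator must give $F_{\mathrm{VSS}}$ the whole extracted polynomial $B(0,y)$, i.e.\ the honest parties' shares, not just $B(0,0)$. Otherwise the honest outputs in the ideal world are re-randomized. The environment can then distinguish them from the shares the corrupt dealer actually fixed.
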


\subsubsection{Universal Composability}
\label{sec:prelim-uc}
We prove security in the Universal Composability (UC) framework of Canetti~\cite{canettiUC}.

A cryptographic task is specified by an \emph{ideal functionality} $\mathcal{F}$: an interactive trusted machine that receives inputs from the parties, performs the prescribed computation, returns outputs, and exposes to the adversary only the leakage the specification permits.

A protocol $\Pi$ \emph{UC-realizes} $\mathcal{F}$ if, for every for every probabilistic 
polynomial-time (PPT) adversary $\mathcal{A}$ attacking a real execution of $\Pi$, there exists a PPT simulator $\mathcal{S}$ such that no PPT environment $\mathcal{Z}$ can distinguish the real execution from the ideal execution of $\mathcal{F}$ with $\mathcal{S}$.
The environment $\mathcal{Z}$ chooses the parties' inputs, observes their outputs, and interacts with the adversary throughout the execution.
The guarantees encoded in $\mathcal{F}$: input confidentiality, output correctness, and any explicitly modelled leakage or abort behaviour, therefore transfer to $\Pi$.
By the universal composition theorem, they are preserved when $\Pi$ is invoked as a sub-protocol inside an arbitrary larger system.

Each sub-protocol ($\VSSShare$, $\Multiply$, $\Reshare$, \dots) is shown to UC-realize its own functionality, and $\Giskard$ is analysed in the $(\mathcal{F}_{\VSSShare}$, $\mathcal{F}_{\Multiply}$, $\mathcal{F}_{\Reshare}$, $\dots)$-hybrid model, from which UC-security of the composed protocol follows.

\subsection{Seed Agreement and Committee Formation}
\label{subsec:committees}

To make our method scalable, we need to be able to form committees of a small number of parties that guarantee, with an extremely high probability, that they contain a sufficient proportion of honest parties. This requires the robust generation of a random seed, followed by the use of a public function so that each party can sample the committee membership. This is a standard construction that has been extensively studied and which we therefore reuse~\cite{millionsMPC, SecureShuffling}, based on the composition of scalable Byzantine
Agreement~\cite{kvs06, kst11, santoniFastBA}, which gives the
following.

\begin{lemma}[Common Reference String Generation]
\label{thm:quorumSize}
For all $\epsilon > 0$, there exists a protocol such that for $n$
parties in a synchronous network with authenticated channels, against
a static adversary corrupting at most $(1/4 - \epsilon)n$ parties:
\begin{itemize}
\item With probability $1 - n^{-c}$ for any constant $c$, all honest
parties agree on a common string $\mathsf{seed}$ of length
$\Theta(\log n)$ with at least a $2/3 + \epsilon$ fraction of
uniformly random bits.
\item The per-party communication cost is $\mathrm{polylog}(n)$ bits.
\item The round complexity is $\mathrm{polylog}(n)$.
\end{itemize}
\end{lemma}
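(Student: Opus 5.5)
The lemma composes known results (the paper says it is reused from~\cite{millionsMPC, SecureShuffling}), so the proof will be a careful instantiation of the scalable Byzantine Agreement machinery of~\cite{kvs06, kst11, santoniFastBA} rather than new analysis. We will structure it in three stages.

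\emph{First stage: a layered election.} Arrange the $n$ parties into a sampler-based tree of quorums of size $\Theta(\log n)$, as in King--Saia--Sanwalani--Vee. Each leaf quorum is assigned by a fixed averaging sampler. For constant $\epsilon$, all but an $O(1/\log n)$ fraction of the quorums contain a $>3/4$ honest fraction; this follows from a Chernoff bound plus the sampler property. At every internal node, the candidates forwarded by the children run a lightweight election, for example Feige's lightest-bin protocol, in which each candidate contributes random bits. The winners move up the tree. Because the election preserves the honest fraction up to an additive $\epsilon'$ per level, and the tree has $O(\log n/\log\log n)$ levels, the root quorum ends with a $(1/4-\epsilon)$-bounded fraction of bad members once $\epsilon'$ is chosen small relative to $\epsilon$.

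\emph{Second stage: forming the seed and spreading it.} Each root member contributes its random bits. The concatenation, of length $\Theta(\log n)$, contains at least a $2/3+\epsilon$ fraction of bits produced by honest members. Those bits are uniform and were fixed before the adversary could adapt, because commitments are made via VSS inside the quorum, so the stated bit-randomness bound holds. The root then agrees on the string, for example by running BA within the quorum, which is possible since there are fewer than $1/3$ corruptions there. The string is then pushed down the tree. Each party accepts the majority value received from its leaf quorum, and the sampler guarantees that all but an $O(1/\log n)$ fraction of parties hear the correct value. The remaining parties are repaired with the almost-everywhere-to-everywhere step of King--Saia~\cite{kst11}, which costs $\tilde O(\sqrt n)$ messages in general, or $\mathrm{polylog}$ messages in the static setting with the random string now available to select sampled responders. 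This step yields agreement by all honest parties.

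\emph{Third stage: complexity and failure probability.} Each party belongs to $\mathrm{polylog}(n)$ quorums, and every election or BA instance within a quorum costs $\mathrm{poly}(\log n)$ bits over $\mathrm{polylog}(n)$ rounds. Summing over the levels therefore gives the stated per-party communication and round bounds. The failure probability of each quorum's composition is $e^{-\Omega(\log n)}$, and choosing the quorum-size constant large enough makes a union bound over polynomially many quorums give $n^{-c}$.

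The main obstacle is the everywhere-agreement step with only $\mathrm{polylog}(n)$ per-party cost. The general construction of~\cite{kst11} is $\tilde O(\sqrt n)$, so I would lean on the static-adversary simplification: the agreed seed lets each party pick random polylog-size sets of parties to query without adaptive targeting. Proving that the seed is "random enough" for this, despite the adversary controlling up to a $1/3-\epsilon$ fraction of its bits, is the point I expect to need the most care. I would handle it with a sampler argument that tolerates a bit-fixing source.
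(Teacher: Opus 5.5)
\textbf{Where the proposal stands.} Your architecture is the same composition the paper uses. The paper's whole argument is one line: compose the almost-everywhere protocol of \cite{kvs06} with \cite[Lemma 9]{santoniFastBA}. The first gives a $\Theta(\log n)$-bit string with a $2/3+\epsilon$ fraction of random bits, learned by all but an $O(1/\log n)$ fraction of honest parties, at polylogarithmic cost. The second lifts almost-everywhere agreement to everywhere agreement at polylogarithmic per-party cost against a static adversary. Your Stages 1--2 reconstruct the first ingredient.

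\textbf{The gap: the almost-everywhere-to-everywhere step.} You repair the confused parties with \cite{kst11}, which by your own account costs $\tilde O(\sqrt n)$ bits per party. That violates the second bullet of the lemma. The static-adversary shortcut you sketch does not close this, because the parties needing repair are exactly those that do not know $\mathsf{seed}$. They therefore cannot compute a $\mathsf{seed}$-keyed sample of responders, and both ways around this fail.
\begin{itemize}
\item If confused parties query parties chosen with private coins, honest parties cannot tell these requests from an unbounded flood of requests by corrupted parties. Either honest cost is unbounded, or legitimate requests are drowned.
\item If instead each knowledgeable $q$ pushes $\mathsf{seed}$ to the parties $p$ with $q \in R_{\mathsf{seed}}(p)$, a confused $p$ faces competing candidates. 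The sampler $R$ is public and the static corruption is chosen after it is fixed. So the adversary can pick a string $s'$ and a victim $p$ in advance, and corrupt all of $R_{s'}(p)$ together with $p$'s leaf quorum, which is only $O(\log n)$ corruptions. Then $s'$ arrives with unanimous support from its own sample, while the true seed arrives with only majority support from $R_{\mathsf{seed}}(p)$.
\end{itemize}
A sampler argument for bit-fixing sources controls the distribution of the true seed. It says nothing about such forged candidates. Breaking this circularity at polylogarithmic cost is exactly what \cite[Lemma 9]{santoniFastBA} provides, so you must invoke it, or reprove it, rather than \cite{kst11}.

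\textbf{Two smaller repairs.}
\begin{itemize}
\item With $\Theta(\log n/\log\log n)$ levels, a constant additive loss $\epsilon'$ per election compounds to $\omega(1)$. The per-level loss must therefore be $O(\epsilon\log\log n/\log n)$, not merely small relative to $\epsilon$.
\item Your Stage 3 union bound over quorum compositions contradicts Stage 1. The quorums are fixed by a public sampler, so the static adversary can deterministically spoil an $O(1/\log n)$ fraction of them. The $n^{-c}$ failure probability has to come from the randomness of the elections and of the seed. The bad quorums are absorbed by the sampler property, not by a union bound over quorums.
\end{itemize}
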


The proof is adapted from composing \cite{kvs06} with \cite[Lemma 9]{santoniFastBA}. The protocol is run only once during system setup. Its output $\mathsf{seed}$ deterministically defines the tree topology via the public random functions. This topology is then reused across all subsequent operations, so no further BA invocations are needed per-operation.

\section{The \Giskard{} Protocol}
\label{sec:protocol}
\subsection{Overview and Intuition}

\Giskard{} is a decentralized protocol that replaces the central parameter server in Federated Robust D-SGD. In the classical setting with $n$ parties, each party $j$
holds a local dataset $\mathcal{D}_j$. At each round, parties send locally computed gradients to a parameter server, which applies a robust aggregator (e.g., coordinate-wise median) and broadcasts the updated model $\theta^{(t+1)}$. 
\Giskard{} performs this robust aggregation in a fully decentralized manner: no party reveals its individual gradient vector in the clear, and the aggregation workload is distributed across multiple committees rather than concentrated at a single server. The result is a protocol that jointly achieves confidentiality, Byzantine robustness and load balancing.

Specifically, \Giskard{} computes the coordinate-wise median via binary search over the value domain: at each iteration, every party compares its local vector to a current pivot and secret-shares the resulting bit vector; the system then aggregates these bits to determine the next pivot. The core primitive is therefore a secure, scalable counter invoked $\lceil \log_2(2u/q) \rceil$ times, where $u$ is the domain size, and $q$ the precision.

In standard MPC structures (all-to-all or all-to-committee), the secure aggregation step imposes a heavy computational burden on the parties responsible for it. To address this load imbalance, \Giskard{} generates multiple committees arranged in a 
$k$-ary tree of depth $L$, where each level is assigned a distinct role as depicted in Figure~\ref{fig:protocol-overview}:

\begin{itemize}
\item \textbf{Leaves (depicted as Level 0 in Figure~\ref{fig:protocol-overview})} -- all $n$ parties. Each compares its 
local vector to the current pivot coordinate-wise and secret-shares the 
resulting bit vector to its assigned base committee.
\item \textbf{Base committees (depicted as Level $1$ in Figure~\ref{fig:protocol-overview})} -- each receives shares from 
$n/k^L$ leaves, validates share domain and well-formedness, sums them, 
and forwards re-shared sums to its parent.
\item \textbf{Intermediate committees (depicted as Levels $2$ to $L-1$ in Figure~\ref{fig:protocol-overview})} -- receive 
shares from $k$ child committees, convert and re-aggregate, forward upward.
\item \textbf{Root committee (depicted as Level $L$ in Figure~\ref{fig:protocol-overview})} -- performs final checks and 
aggregation, reconstructs the count in the clear, selects the next pivot, 
and disseminates it down the tree.
\end{itemize}

After aggregating the count, the latter is revealed and from that the next pivot is chosen before being sent down the tree so that every parties get the updated pivot for the next iteration. The protocol terminates after $N_{\text{iter}} = \lceil \log_2(2u/q) \rceil$ iterations, at which point the median is disseminated via the same tree.

\begin{figure*}[t]
  \centering
  \begin{tikzpicture}[
      every node/.style={font=\small},
      leaf/.style={circle, draw=blue!70, fill=blue!10, minimum size=20pt,  minimum width=0.7cm,
                   inner sep=0pt, font=\scriptsize},
      icommittee/.style={rectangle, rounded corners=2pt, draw=orange!70, fill=orange!10,
                      minimum width=1.15cm, minimum height=0.6cm, font=\small\bfseries},
      committee/.style={rectangle, rounded corners=3pt, draw=red!70, fill=red!15,
                        minimum width=1.6cm, minimum height=0.65cm,
                        font=\normalsize\bfseries},
      shareup/.style={->, line width=1.5pt, blue!40},
      countup/.style={->, line width=1.5pt, orange!50},
      arrowdown/.style={->, line width=1.5pt, green!50!black, dashed},
      lockinline/.style={fill=white, inner sep=1pt, font=\scriptsize, text=blue!60!black},
      countinline/.style={fill=white, inner sep=1pt, font=\scriptsize, text=orange!70!black},
      databox/.style={rectangle, rounded corners=1pt, draw=blue!50, fill=blue!5,
                      minimum width=0.7cm, minimum height=0.35cm, font=\scriptsize},
      dotsstyle/.style={font=\large, text=gray},
      phasegroup/.style={font=\small, align=right, anchor=east, text width=2.6cm},
      levellbl/.style={font=\small\itshape, text=gray},
      legendtxt/.style={font=\small, anchor=west, align=left},
      leaf/.append style={font=\scriptsize, minimum width=0.8cm, inner sep=0pt},
  databox/.append style={font=\scriptsize, minimum width=0.7cm, inner sep=0pt},
  ]

      \node[committee] (top) at (2.5, 1.9) {$\mathsf{COM}$};

      \foreach \i/\lbl/\x in {1/1/-2.35, 2/2/-1.65, 3/{n/b}/-0.65}{
        \node[leaf] (la\i) at (\x, -4.6) {$P_{\lbl}$};
        \node[databox] at (\x, -5.35) {$\mathbf{x}_{\lbl}$};
      }
      \node[dotsstyle] at (-1.15, -4.6) {$\cdots$};
      \node[dotsstyle] at (-1.15, -5.35) {$\cdots$};

      \foreach \i/\lbl/\x in {1/{i}/1.65, 2/{i{+}1}/2.35, 3/{i{+}n/b}/3.35}{
        \node[leaf] (lm\i) at (\x, -4.6) {$P_{\lbl}$};
        \node[databox] at (\x, -5.35) {$\mathbf{x}_{\lbl}$};
      }
      \node[dotsstyle] at (2.85, -4.6) {$\cdots$};
      \node[dotsstyle] at (2.85, -5.35) {$\cdots$};

      \node[dotsstyle, font=\LARGE] at (0.5, -4.6) {$\cdots$};
      \node[dotsstyle, font=\LARGE] at (0.5, -3.0) {$\cdots$};
      \node[dotsstyle, font=\LARGE] at (4.5, -4.6) {$\cdots$};
      \node[dotsstyle, font=\LARGE] at (4.5, -3.0) {$\cdots$};

\foreach \i/\lbl/\x in {1/{n{-}n/b}/5.65, 2/{n{-}1}/6.65, 3/{n}/7.35}{
  \node[leaf] (lc\i) at (\x, -4.6) {$P_{\lbl}$};
  \node[databox] at (\x, -5.35) {$\mathbf{x}_{\lbl}$};
}
\node[dotsstyle] at (6.15, -4.6) {$\cdots$};
\node[dotsstyle] at (6.15, -5.35) {$\cdots$};

      \node[icommittee] (c1) at (-1.5, -3.0) {$\mathcal{C}_{1}$};
      \node[icommittee] (cmid) at (2.5, -3.0) {$\mathcal{C}_{a}$};
      \node[icommittee] (cb) at (6.5, -3.0) {$\mathcal{C}_{b}$};

      \foreach \i in {1,2,3}{
        \draw[shareup] (la\i) -- node[lockinline, pos=0.55] {\faLock} (c1);
      }
      \foreach \i in {1,2,3}{
        \draw[shareup] (lm\i) -- node[lockinline, pos=0.55] {\faLock} (cmid);
      }
      \foreach \i in {1,2,3}{
        \draw[shareup] (lc\i) -- node[lockinline, pos=0.55] {\faLock} (cb);
      }

      \node[icommittee, minimum width=1.25cm] (m1) at (-0.5, -1.2) {$\mathcal{C}'_{1}$};
      \node[icommittee, minimum width=1.25cm] (mmid) at (2.5, -1.2) {$\mathcal{C}'_{a}$};
      \node[icommittee, minimum width=1.25cm] (mk) at ( 5.5, -1.2) {$\mathcal{C}'_{b}$};

      \node[dotsstyle, font=\LARGE] at (1.0, -1.2) {$\cdots$};
      \node[dotsstyle, font=\LARGE] at (4.0, -1.2) {$\cdots$};

      \draw[countup] (c1.north) -- node[countinline, pos=0.6] {\faLock\,$\scriptstyle\Sigma$} ([xshift=-14pt]m1.south);
      \draw[countup] (-0.5, -2.25) -- node[countinline, pos=0.5] {\faLock\,$\scriptstyle\Sigma$} (m1.south);
      \draw[countup] ( 0.3, -2.25) -- node[countinline, pos=0.5] {\faLock\,$\scriptstyle\Sigma$} ([xshift=14pt]m1.south);
      \node[dotsstyle, font=\small] at (-0.1, -2.45) {$\cdots$};

      \draw[countup] (1.7, -2.25) -- node[countinline, pos=0.5] {\faLock\,$\scriptstyle\Sigma$} ([xshift=-14pt]mmid.south);
      \draw[countup] (cmid.north) -- node[countinline, pos=0.6] {\faLock\,$\scriptstyle\Sigma$} (mmid.south);
      \draw[countup] (3.3, -2.25) -- node[countinline, pos=0.5] {\faLock\,$\scriptstyle\Sigma$} ([xshift=14pt]mmid.south);

      \draw[countup] (cb.north) -- node[countinline, pos=0.6] {\faLock\,$\scriptstyle\Sigma$} ([xshift=14pt]mk.south);
      \draw[countup] (5.5, -2.25) -- node[countinline, pos=0.5] {\faLock\,$\scriptstyle\Sigma$} (mk.south);
      \draw[countup] (4.7, -2.25) -- node[countinline, pos=0.5] {\faLock\,$\scriptstyle\Sigma$} ([xshift=-14pt]mk.south);
      \node[dotsstyle, font=\small] at (5.1, -2.45) {$\cdots$};

      \draw[countup] (m1.north) -- (-0.5, -0.35);
      \node[dotsstyle, font=\Large] at (-0.5, 0.2) {$\vdots$};
      \draw[countup] (-0.5, 0.75) -- node[countinline, pos=0.55] {\faLock\,$\scriptstyle\Sigma$} ([xshift=-16pt]top.south);

      \draw[countup] (mmid.north) -- (2.5, -0.35);
      \node[dotsstyle, font=\Large] at (2.5, 0.2) {$\vdots$};
      \draw[countup] (2.5, 0.75) -- node[countinline, pos=0.55] {\faLock\,$\scriptstyle\Sigma$} (top.south);

      \draw[countup] (mk.north) -- (5.5, -0.35);
      \node[dotsstyle, font=\Large] at (5.5, 0.2) {$\vdots$};
      \draw[countup] (5.5, 0.75) -- node[countinline, pos=0.55] {\faLock\,$\scriptstyle\Sigma$} ([xshift=16pt]top.south);

      \node[dotsstyle, font=\small] at (1.2, 0.85) {$\cdots$};
      \node[dotsstyle, font=\small] at (3.8, 0.85) {$\cdots$};

      \draw[arrowdown] (top.west)
          to[out=-120, in=90] (-0.5, 0.75);
      \draw[arrowdown] (top.east)
          to[out=-60, in=90] (5.5, 0.75);
      \draw[arrowdown] ([xshift=5pt]top.south) -- ([xshift=5pt]2.5, 0.75);

      \draw[arrowdown] (-1.0, -0.15)
          to[out=-90, in=120] ([xshift=4pt]m1.north west);
      \draw[arrowdown] ([xshift=5pt]2.5, -0.15) -- ([xshift=5pt]mmid.north);
      \draw[arrowdown] (6.0, -0.15)
          to[out=-90, in=60] ([xshift=-4pt]mk.north east);

      \draw[arrowdown] (m1.west)
          to[out=180, in=90] (-2.9, -2.05)
          to[out=-90, in=120] (c1.north west);
      \draw[arrowdown] (mk.east)
          to[out=0, in=90] (7.9, -2.05)
          to[out=-90, in=60] (cb.north east);
      \draw[arrowdown] ([xshift=5pt]mmid.south) -- ([xshift=5pt]cmid.north);

      \draw[dashed, gray!70] (-3.3, -2.1) -- (8.3, -2.1);
      \draw[dashed, gray!70] (-3.3,  0.95) -- (8.3,  0.95);

      \node[levellbl] at (8.7, -4.6) {Level $0$};
      \node[levellbl] at (8.7, -3.0) {Level $1$};
      \node[levellbl] at (8.7, -1.2) {Level $2$};
      \node[levellbl] at (8.7,  0.2) {$\vdots$};
      \node[levellbl] at (8.7,  1.9) {Level $L$};

      \node[phasegroup, text=blue!60!black] at (-3.6, -4.0)
          {\textbf{Phase 1}\\[2pt]
           clients hold $\mathbf{x}_j$, compare $x_{j,d}\!<\!p_{t,d}$, share $[\mathbf{b}_j]$};

      \node[phasegroup, text=orange!60!black] at (-3.6, -1.2)
          {\textbf{Phase 2}\\[2pt]
           verify, aggregate $\sum\mathbf{b}_j$, re-share upward};

      \node[phasegroup, text=red!60!black] at (-3.6, 1.7)
          {\textbf{Phases 3--4}\\[2pt]
           reveal global $\mathsf{count}$, broadcast $p_{t+1}$};

      \node[legendtxt, text=blue!50!black] at (8.7, -3.85)
          {\faLock\,$[\mathbf{b}_j]$: shared\\bit vector};
      \node[legendtxt, text=orange!60!black] at (8.7, -2.1)
          {\faLock\,$\Sigma$: shared\\partial count};
      \draw[arrowdown] (8.7, -0.4) -- (9.2, -0.4);
      \node[legendtxt, text=green!50!black] at (9.3, -0.4)
          {broadcast\\$p_{t+1}$};

  \end{tikzpicture}
  \caption{Giskard secure median computation. The $n$ parties $P_1,\dots,P_n$ are partitioned across $b = n / k^L$ base committees $\mathcal{C}_1,\dots,\mathcal{C}_b$, each of size $m$. Each base committee receives $n/b$ bit-vector shares from its assigned parties and aggregates them into a partial count. Partial counts are re-shared upward through $L$ levels of $k$-ary aggregation to the root committee $\mathsf{COM}$, which broadcasts the next threshold $p_{t+1}$ downward (dashed).}
  \label{fig:protocol-overview}
\end{figure*}

\begin{algorithm}[ht]
\caption{Giskard: secure Byzantine-robust coordinate-wise median}
\label{alg:giskard-bgw}
\begin{algorithmic}[1]
\Require Tree $\mathcal{T}$ ($k$-ary, depth $L$); committee size $m = O(\log n)$; threshold $\tau = \lfloor (1/4 - \epsilon) m \rfloor$; precision $q$; value-domain bound $u$
\Require Each leaf $j \in [n]$ holds private $\mathbf{x}_j \in \mathbb{R}^D$
\Ensure Coordinate-wise median $\hat{M} \in \mathbb{R}^D$

\Statex \textbf{--- Setup (once) ---}
\State Run \BuildTree; instantiate $\mathcal{T}$

\Statex \textbf{--- Median computation (each FL round) ---}
\State $\mathsf{Left} \gets -u \cdot \mathbf{1}; \quad \mathsf{Right} \gets u \cdot \mathbf{1}$
\State $\mathbf{p}_1 \gets (\mathsf{Left} + \mathsf{Right}) / 2$
\For{$t = 1, \ldots, N_{\text{iter}}$}
  \Statex \hspace{.5em}\textit{--- Phase 1: input sharing ---}
  \State Each leaf $j$ runs $\textsc{InputShare}(\mathbf{x}_j, \mathbf{p}_t) \to \mathcal{C}_j$
    \Comment{Prot.~\ref{alg:inputshare-bgw}}
  \State Each $P_i \in \mathcal{C}_j$: verify and aggregate $\to \boldsymbol{\sigma}_i$

  \Statex \hspace{.5em}\textit{--- Phase 2: tree resharing ---}
  \For{$\ell = 1$ \textbf{to} $L - 1$}
    \For{each $\mathcal{C}_{\mathrm{ch}} \to \mathcal{C}_{\mathrm{pa}}$ \textbf{in parallel}}
      \State Run $\textsc{Reshare}(\mathcal{C}_{\mathrm{ch}}, \mathcal{C}_{\mathrm{pa}})$
        \Comment{Prot.~\ref{alg:reshare-bgw}}
    \EndFor
    \State Cross-child aggregation at each $\mathcal{C}_{\mathrm{pa}}$:
    \Statex \hspace{2em} $\boldsymbol{\sigma}_r \gets \sum_{\mathrm{ch}} \boldsymbol{\sigma}'_{r, \mathrm{ch}}$ for each $P'_r \in \mathcal{C}_{\mathrm{pa}}$
  \EndFor

  \Statex \hspace{.5em}\textit{--- Phase 3: reconstruction ---}
  \State $\boldsymbol{\mathsf{count}} \gets \VSSReconst(\{\boldsymbol{\sigma}_r\}_{r \in \mathcal{C}_L})$

  \Statex \hspace{.5em}\textit{--- Phase 4: broadcast and update ---}
  \State $\mathcal{C}_L$ updates $(\mathsf{Left}, \mathsf{Right})$ coordinate-wise from $\boldsymbol{\mathsf{count}}$ and $\mathbf{p}_t$
  \State $\mathcal{C}_L$ broadcasts $\mathbf{p}_{t+1} \gets (\mathsf{Left} + \mathsf{Right})/2$ down the tree
\EndFor

\State $\hat{M} \gets (\mathsf{Left} + \mathsf{Right})/2$; broadcast down the tree
\State \Return $\hat{M}$
\end{algorithmic}
\end{algorithm}

\subsection{Sub-Protocols}
We now detail the three sub-protocols invoked by \Giskard{} (Algorithm~\ref{alg:giskard-bgw}): $\BuildTree$ (line~1), $\textsc{InputShare}$ (line~5), and $\Reshare$ (line~9).

\subsubsection{$\BuildTree$ for Topology generation}

The first step is a one-time protocol that randomly generates the communication topology.
All honest parties agree on their arrangement within a $k$-ary tree of depth $L$: the leaves represent individual parties, and each internal node is a \emph{well-constructed committee}, a group of $m$ parties with fewer than $m/4$ corrupted members.
A single party may belong to multiple committees.
Once the topology is fixed, each party learns the base committee it is assigned to, the internal committees it belongs to, and the membership of those committees.
By Theorem~\ref{thm:quorumSize}, a seed-agreement protocol yields an unpredictable string; from this seed, two public pseudorandom functions determine committee membership and leaf assignment.

\textbf{Committee membership.}
For each tree level $\ell \in \{1, \ldots, L\}$, we derive a permutation $\pi_\ell$ of $[n]$ by Fisher--Yates shuffling seeded with $H(\mathsf{seed} \,\|\, \ell)$, where $H$ is a public hash function modeled as a random oracle.
The committees at level $\ell$ are the $k^{L-\ell}$ disjoint $m$-blocks in the prefix of $\pi_\ell$.
By construction, every party sits in at most one committee per level, so in the worst case a party participates in $L$ committees, one per level.
Within any level, each committee is marginally a uniformly random $m$-subset of $[n]$, so per-committee corruption follows $\mathrm{Hyp}(n, f, m)$; the global failure probability is bounded by a union bound over all committees.

\textbf{Leaf assignment.}
Each party $i \in [n]$ is assigned to leaf position $\pi_L(i)$.
The base committee of party $i$ is the level-$(L-1)$ committee whose
$k$ children include leaf $\pi_L(i)$, i.e., committee
$\lceil \pi_L(i) / k \rceil$ in the level-$(L-1)$ ordering.
Since $\pi_L$ is a permutation, every party occupies exactly one leaf
and every leaf is occupied by exactly one party.

We call this protocol \BuildTree.
It includes the seed-agreement subprotocol and the local committee-membership and leaf-assignment computations.
The benefit of this topology is most pronounced when committee size is small relative to $n$.
We show that such a tree exists under global corruption bounded by $(1/4 - \epsilon)n$:

\begin{theorem}
\label{thm:treeConstruction}
Against a static adversary corrupting at most $(1/4 - \epsilon)n$ parties, $\BuildTree$ produces a $k$-ary tree of depth $L$ in which every committee has size $m = O(\epsilon^{-2} \ln n)$ and fewer than $m/4$ corrupted members, with probability $1 - n^{-c}$ for any constant $c$.
\end{theorem}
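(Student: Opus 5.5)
The plan is a union bound over all committees, with a hypergeometric tail bound for each one, conditioned on the success event of the seed-agreement protocol from Lemma~\ref{thm:quorumSize}. First I condition on that event, which fails with probability at most $n^{-c'}$ for any constant $c'$. On it, all honest parties agree on $\mathsf{seed}$, so every honest party computes the same permutations $\pi_\ell$ from $H(\mathsf{seed}\,\|\,\ell)$ and hence the same tree.

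Because $H$ is modelled as a random oracle and the adversary is static, the corrupted set is fixed before $\mathsf{seed}$ is produced. Each $\pi_\ell$ is then a uniformly random permutation independent of the corrupted set. The caveat is that $\mathsf{seed}$ is only partially random: it has at least a $(2/3+\epsilon)$ fraction of uniform bits, i.e.\ $\Theta(\log n)$ bits of min-entropy. I handle this by a counting argument. I bound the number of \emph{bad} seeds, those yielding some committee with at least $m/4$ corruptions. The adversary can fix the non-random bits, so a bad event of probability $\delta$ under a fully uniform seed has probability at most $2^{(1/3-\epsilon)|\mathsf{seed}|}\delta$ under the actual seed. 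This blow-up is $n^{O(1)}$. It is absorbed by choosing the constant in $m$ large enough to make $\delta \le n^{-c''}$ for a sufficiently large $c''$.

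Fix a level $\ell$ and a committee $\cluster$ at that level. $\cluster$ is a uniformly random $m$-subset of $[n]$, so its number of corrupted members $X$ follows $\mathrm{Hyp}(n,f,m)$ with $f\le(1/4-\epsilon)n$, giving $\E[X]\le(1/4-\epsilon)m$. The event $X\ge m/4$ requires a deviation of $\epsilon m$ above the mean. Hoeffding's inequality applies to sampling without replacement (Hoeffding 1963, or Serfling), so
\[
\Pr[X \ge m/4] \le \exp(-2\epsilon^2 m).
\]
Choosing $m = \lceil (c''+1)\epsilon^{-2}\ln n/2\rceil = O(\epsilon^{-2}\ln n)$ makes this at most $n^{-(c''+1)}$.

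The tree has $\sum_{\ell=1}^{L} k^{L-\ell} \le n$ committees, since each level uses disjoint $m$-blocks of a permutation of $[n]$. A union bound therefore gives total failure probability at most $n^{-c''}$ under a uniform seed. After the min-entropy loss and the seed-agreement failure, the failure probability is still $n^{-c}$ for any target $c$, with $c''$ chosen accordingly.

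Structural correctness is deterministic and immediate. Each level needs $k^{L-\ell} m \le n$ parties, which is the design constraint on $k$, $L$ and $m$. Leaves are filled bijectively by $\pi_L$, so the result is a $k$-ary tree of depth $L$ whose committees all have size exactly $m$.

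I expect the main obstacle to be the partial randomness of $\mathsf{seed}$, not the tail bound. The counting argument above only works if the constant number of adversarially influenced bits costs a polynomial factor that the exponent in $m$ can absorb. This in turn requires $|\mathsf{seed}| = \Theta(\log n)$ with a small enough constant. If it does not, I would instead extract randomness as in \cite{kvs06,santoniFastBA}: hash the seed through the random oracle and argue that the adversary can try at most $2^{(1/3-\epsilon)|\mathsf{seed}|}=\mathrm{poly}(n)$ effective choices, each failing with probability $n^{-c''}$.
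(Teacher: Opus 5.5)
Your proposal is correct and follows the same route as the paper's proof: a per-committee $\mathrm{Hyp}(n,f,m)$ tail bound via Hoeffding's without-replacement comparison, a union bound over the $O(n)$ committees, and solving for $m = O(\epsilon^{-2}\ln n)$. Your additive bound $\exp(-2\epsilon^2 m)$ is slightly sharper than the paper's multiplicative Chernoff bound $\exp(-2\epsilon^2 m/(3+4\epsilon))$ on the honest count, and your extra steps (conditioning on seed agreement, and paying a $2^{(1/3-\epsilon)|\mathsf{seed}|} = \mathrm{poly}(n)$ factor for the adversarially fixed seed bits) are valid; they fill in what the paper skips with ``since $\mathsf{seed}$ is unpredictable,'' and the counting step needs only the seed-length constant to be fixed in advance, not small, since $c''$ is chosen afterwards.
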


\begin{proof}[Proof sketch]
Let $\mathsf{seed}$ be the output of the seed-agreement protocol (Lemma~\ref{thm:quorumSize})
Since corruption is static and fixed before $\mathsf{seed}$ is sampled, each committee is a uniformly random $m$-subset of $[n]$, and its corruption count follows $\mathrm{Hyp}(n, f, m)$.
A Hoeffding reduction~\cite{Hoeffding63} to $\mathrm{Bin}(m, f/n)$ followed by a Chernoff bound with $f/n = 1/4 - \epsilon$ gives $\Pr[X \geq m/4] \leq \exp(-2\epsilon^2 m / (3 + 4\epsilon))$.
Union-bounding over the $O(n)$ committees and solving for the desired $n^{-c}$ failure probability yields $m = O(\epsilon^{-2} \ln n)$. Full proof in Appendix~\ref{app:treeconstruction-proof}.
\end{proof}

The bound above is loose. Figure~\ref{fig:giskard_min_quorum_size} 
reports a tighter numerical evaluation at target $\Pr[\text{failure}] = 10^{-5}$, showing that the required committee size shrinks as the gap between the global corruption fraction and the $1/4$ per-committee requirement widens, and that for $f/n \in [0, 1/5]$ it remains stable as the network size grows.

\begin{figure}[ht]
\centering
\includegraphics[width=0.95\linewidth]{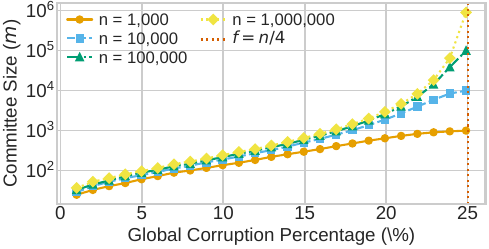}
\caption{Minimum committee size for varying global corruption fractions across different network sizes, at $\Pr[\text{failure}] = 10^{-5}$.}
\label{fig:giskard_min_quorum_size}
\end{figure}

\subsubsection{InputShare with domain proof}

After locally comparing its vector coordinates to the current pivot, leaf $j$ secret-shares the resulting bit vector $b_{j,d} = \mathbf{1}[x_{j,d} < p_{t,d}]$ via \VSSShare to its assigned base committee.
A malicious leaf could share an arbitrary field element instead of a bit, biasing the aggregated count by an unbounded amount.
To prevent this, the committee verifies that each shared value $b$ satisfies $b \in \{0,1\}$ via the standard degree-$2$ domain proof: it runs $\Multiply([b], [1-b])$ and reconstructs the result using $\VSSReconst$.
If the reconstruction yields $0$, the share is accepted; otherwise the committee collectively replaces it with a canonical zero sharing.
This bounds the Byzantine contribution per iteration to $C_B \in [0, f]$, matching the breakdown-point analysis of Theorem~\ref{thm:scalar-robustness}.
Committee parties then locally sum the validated shares, then by linearity of Shamir sharing, this produces a share of the partial count from the incoming leaves, which becomes the input to the tree-aggregation phase.

\subsubsection{Verifiable share redistribution via \Reshare}

\Reshare{} converts a $(\tau, m)$-Shamir sharing held by a child committee $\mathcal{C}_{\mathrm{ch}}$ into a fresh $(\tau, m)$-sharing held by its parent $\mathcal{C}_{\mathrm{pa}}$ (all committees have size $m$ and threshold $\tau < m/4$).
Two adversarial behaviors must be taken into account: a corrupted child may send sub-shares that are inconsistent with any degree-$\tau$ polynomial, breaking reconstruction. Even consistent sub-shares must provably interpolate to the partial sum held by the child committee, not an adversarially chosen value.

Both are handled by $\VSSSubShare$~\cite[Prot.~6.8]{bgwSubshare}, using the $\tau < m/4$ optimization of their Appendix~A.
Each $P_i \in \mathcal{C}_{\mathrm{ch}}$ holds a share $\sigma_i = f(\alpha_i)$ of the child's partial sum, where $f$ is the degree-$\tau$ sharing polynomial and $\alpha_i$ is $P_i$'s evaluation point.
$P_i$ samples a fresh degree-$\tau$ polynomial $g_i$ with $g_i(0) = \sigma_i$ and sends $g_i(\omega_r)$ to each parent $P'_r \in \mathcal{C}_{\mathrm{pa}}$ at that party's evaluation point $\omega_r$.
The functionality guarantees $g_i(0) = f(\alpha_i)$ via Reed-Solomon decoding over the honest children's sub-shares.
Each parent then locally computes
\begin{equation}
\sigma'_r = \sum_{i \in \mathcal{C}_{\mathrm{ch}}} \lambda_i \cdot g_i(\omega_r),
\label{eq:reshare-recombine}
\end{equation}
where $\{\lambda_i\}$ are the Lagrange coefficients interpolating $f(0)$ from $\{f(\alpha_i)\}_{i \in \mathcal{C}_{\mathrm{ch}}}$.
Since $g_i(0) = f(\alpha_i)$, equation~\eqref{eq:reshare-recombine} evaluates a fresh degree-$\tau$ polynomial at $\omega_r$ whose constant term is $\sum_i \lambda_i f(\alpha_i) = f(0)$, the value shared by the child committee.
We describe the protocol for a single coordinate; it runs in parallel across all $d \in [D]$.
The full protocol is given in Appendix~\ref{app:it-subprotocols}.

\begin{figure}[t]
\centering
\begin{tikzpicture}[
  >=Latex,
  committeeparty/.style={circle, draw=orange!70, fill=orange!15,
                          minimum size=0.85cm, inner sep=1pt, font=\small},
]
\node[committeeparty] (p1) at (0, 0) {$P_i$};
\node[committeeparty, opacity=0.35, left=0.3cm of p1] (p2) {$P_2$};
\node[committeeparty, opacity=0.35, right=0.3cm of p1] (p3) {$P_3$};
\node[draw=orange!70, dashed, rounded corners,
      fit=(p2)(p1)(p3), inner sep=3pt,
      label={[font=\footnotesize, text=orange!70!black]below:
        {Child $\mathcal{C}_{\mathrm{ch}}$}}] {};
\draw[dashed, gray!40] (p2) -- (p1);
\draw[dashed, gray!40] (p3) -- (p1);
\node[committeeparty] (a1) at (-1.3, 2.4) {$P'_1$};
\node[committeeparty] (a2) at (0, 2.4)    {$P'_2$};
\node[committeeparty] (a3) at (1.3, 2.4)  {$P'_3$};
\node[draw=orange!70, rounded corners,
      fit=(a1)(a2)(a3), inner sep=3pt,
      label={[font=\footnotesize, text=orange!60!black]above:
        {Parent $\mathcal{C}_{\mathrm{pa}}$}}] {};
\draw[->, thick, orange!50]
  (p1) -- node[left, font=\footnotesize, pos=0.5]
    {$g_{i,d}(\omega_1)$} (a1);
\draw[->, thick, orange!50]
  (p1) -- node[fill=white, inner sep=1pt, font=\footnotesize, pos=0.55]
    {$g_{i,d}(\omega_2)$} (a2);
\draw[->, thick, orange!50]
  (p1) -- node[right, font=\footnotesize, pos=0.5]
    {$g_{i,d}(\omega_3)$} (a3);
\end{tikzpicture}
\caption{Share Resharing. Each $P_i\!\in\!\mathcal{C}_{\mathrm{ch}}$ holds share $\sigma_{d,i}$ of the partial count $\sum \mathbf{b}_j$ and samples polynomial $g_{i,d}$ with $g_{i,d}(0)\!=\!\sigma_{d,i}$. Each $P'_r\!\in\!\mathcal{C}_{\mathrm{pa}}$ recombines $\sigma'_{d,r}\!=\!\sum_k \lambda_k\, g_{k,d}(\omega_r)$ into a fresh share of the same count.}
\label{fig:resharing}
\end{figure}

\subsection{Security Analysis of Giskard}

We prove that $\Giskard$ securely realizes the coordinate-wise median functionality under UC security (Section~\ref{sec:prelim-uc}).

\subsubsection{Ideal Functionalities}
\label{sec:ideal-func}

$\Giskard$ relies on the following subprotocols defined with such ideal functionalities:
\begin{itemize}
    \item $F_{\textsc{BA}}$: Byzantine Agreement to ensure reliable broadcast during the setup and the output phase~\cite{santoniFastBA}
    \item $F_{\VSSShare}$: BGW VSS scheme~\cite{bgw88}
    \item $F_{\VSSReconst}$: Reconstruction in the BGW VSS scheme~\cite{bgw88}
    \item $F_{\VSSSubShare}$: The subshare functionality from \citet{bgwSubshare}
    \item $F_{\Multiply}$: Multiplication within the BGW VSS scheme from \citet{bgwSubshare}
    \item $F_{\textsc{InputShare}}$: Described in Protocol~1 using $\VSSShare$ for input sharing, and $\Multiply$ and $\VSSReconst$ for proof verification
    \item $F_{\Reshare}$: Verifiable resharing, described in Protocol~2
\end{itemize}

\paragraph{Leakage}
By design, this protocol allows the global count used to update the threshold to be revealed to all parties but the individual inputs and intermediate counts stay hidden.
This system must then not reveal anything more than what is permitted here.

\subsubsection{Security Statements}
\label{sec:security-statements}
To prove that $\Giskard$ is UC-Secure we must prove that all the defined protocols it calls are UC-Secure, which are the ones whose ideal functionalities were defined ($\textsc{BA}$, $\VSSShare$, $\VSSReconst$, $\VSSSubShare$, $\Multiply$, $\textsc{InputShare}$, $\Reshare$). We assume that $\BuildTree$ was successfully performed before running the secure median computation. We prove that $\Giskard$ is secure under a hybrid model that uses those protocols as realized by a trusted party. We have the following results:
\begin{lemma}
Protocols $\textsc{BA}$, $\VSSShare$, $\VSSReconst$, $\VSSSubShare$, $\Multiply$, $\VSSReconst$ and $\Reshare$ are UC-Secure.
\end{lemma}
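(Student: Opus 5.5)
The plan is to treat the lemma as mostly a matter of citing and lifting existing results, so that only $\Reshare$ needs a genuine argument. For $\textsc{BA}$, I would invoke the scalable Byzantine agreement of \citet{santoniFastBA}, as composed in Lemma~\ref{thm:quorumSize}. It realizes $F_{\textsc{BA}}$ against a static adversary corrupting at most $(1/4-\epsilon)n$ parties, and it is a deterministic-output agreement with no secret inputs. UC-security then follows from the standard observation that a straight-line simulator can run the honest parties' code on inputs obtained from the functionality.

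For $\VSSShare$, $\VSSReconst$, $\VSSSubShare$ and $\Multiply$, I would appeal to Theorem~\ref{thm:bgw}. These are exactly the sub-protocols analysed by \citet{bgwSubshare}, shown secure against a static malicious adversary with $t<n/3$ corruptions via a straight-line black-box simulator. Each committee produced by $\BuildTree$ has fewer than $m/4 < m/3$ corruptions (Theorem~\ref{thm:treeConstruction}), so the hypothesis holds inside every committee. Perfect security with a straight-line, non-rewinding black-box simulator in the synchronous model implies UC-security, by the result of Kushilevitz--Lindell--Rabin that Asharov--Lindell invoke. For $\VSSReconst$ I would also check that Reed--Solomon decoding with $m\ge 4\tau+1 > 3\tau+1$ evaluation points corrects the at most $\tau$ corrupted shares, so the output is fixed by the honest shares.

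The main work is $\Reshare$, which I would prove in the $F_{\VSSSubShare}$-hybrid model by building a simulator $\mathcal{S}$:
\begin{enumerate}
\item $F_{\Reshare}$ takes the degree-$\tau$ sharing $f$ in $\mathcal{C}_{\mathrm{ch}}$ and outputs a fresh degree-$\tau$ sharing of $f(0)$ to $\mathcal{C}_{\mathrm{pa}}$. The adversary may choose the corrupted parents' shares, as in standard resharing functionalities.
\item $\mathcal{S}$ receives the corrupted children's shares $f(\alpha_i)$ and the corrupted parents' target shares. It plays $F_{\VSSSubShare}$, which already forces $g_i(0)=f(\alpha_i)$ for every child, including corrupted ones.
\item For each honest child, $\mathcal{S}$ samples $g_i$ of degree $\tau$ consistent only with the values seen by corrupted parents. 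Since at most $\tau$ parents are corrupted, these evaluations are uniform and independent of $g_i(0)$, so the simulated view is identically distributed.
\item Correctness comes from the Lagrange identity in \eqref{eq:reshare-recombine}: $\sum_i\lambda_i g_i$ has degree $\tau$ and constant term $f(0)$, matching the ideal output.
\end{enumerate}

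The step I expect to be the main obstacle is the joint distribution of the honest parents' output shares with the corrupted parents' view. It must equal that of a fresh random sharing conditioned on the corrupted shares. I would handle this by noting that the honest children's $g_i$ contribute independent uniform degree-$\tau$ randomness. A single honest child suffices, and there are at least $m-\tau$ of them.

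Finally I would state that the lemma holds conditioned on the success event of $\BuildTree$. Its failure probability $n^{-c}$ is absorbed into the distinguishing advantage, which makes the security statistical rather than perfect.
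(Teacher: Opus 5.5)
Your treatment of $\VSSShare$, $\VSSReconst$, $\VSSSubShare$ and $\Multiply$ is exactly the paper's: perfect security with a straight-line black-box simulator from \citet{bgwSubshare}, lifted to UC by \cite[Theorem 1.2]{klrStraightLineUC}. You usefully add an explicit check that $\tau < m/4$ meets BGW's threshold inside every committee.

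You diverge on the other two items. For $\Reshare$, the paper builds no simulator. It asserts that the protocol is identical to the Renew-Share protocol of \citet{SecureShuffling}, which is proven secure in the $(F_{\VSSShare}, F_{\VSSSubShare})$-hybrid model, and it then appeals to modular composition. Your direct $F_{\VSSSubShare}$-hybrid simulator is self-contained, and it makes explicit what the citation hides:
\begin{itemize}
\item the exact $F_{\Reshare}$, in particular that the adversary fixes the corrupted parents' output shares;
\item why the parent sharing is fresh given the corrupted view.
\end{itemize}
The citation is shorter, but it rests on the claimed identity of the two protocols and on that paper's model.

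For $\textsc{BA}$, the paper's proof is silent, although the lemma lists it. Your straight-line argument therefore fills a hole rather than duplicating anything. It only goes through, however, if $F_{\textsc{BA}}$ leaks honest inputs to the simulator and lets it fix the output subject to agreement and validity. Since scalable BA is randomized, you also get statistical rather than perfect security.

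A few details to tighten:
\begin{itemize}
\item In step 2 the simulator does not receive the corrupted parents' shares. It computes them as $\sum_i \lambda_i g_i(\omega_r)$ from its simulated $F_{\VSSSubShare}$ outputs and forwards them to $F_{\Reshare}$.
\item ``A single honest child suffices'' needs that child's $\lambda_j \neq 0$. This holds for the full-$m$ Lagrange coefficients at distinct nonzero points, where all coefficients are nonzero, but you should say so.
\item The $F_{\VSSSubShare}$ of \citet{bgwSubshare} has dealers and receivers in a single party set, so the child-to-parent version is an adaptation rather than a direct citation. The paper makes the same implicit move.
\item A failure probability $n^{-c}$ for fixed $c$ is inverse-polynomial, not negligible. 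Calling the result statistically secure in the UC sense therefore needs $m$ to grow with a security parameter. Alternatively, as the paper does, state the lemma conditioned on $\BuildTree$ succeeding.
\end{itemize}
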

\begin{proof}
    Using the BGW full security proof by \citet{bgwSubshare} already shows that $\VSSShare$, $\VSSReconst$, $\VSSSubShare$, $\Multiply$ and $\VSSReconst$ are perfectly secure using a straight-line black-box simulator. Using \cite[Theorem 1.2]{klrStraightLineUC}, it proves that those protocols are UC-Secure.

    Furthermore, our $\Reshare$ protocol is identical to \citet{SecureShuffling} $\textsc{Renew-Share}$ protocol, that they show to be secure under the ($F_{\VSSShare}$, $F_{\VSSSubShare}$)-hybrid model then invokes the modular theorem to prove that it is UC-Secure.
\end{proof}

Now we prove that $\textsc{InputShare}$ is UC-Secure by proving its security in the ($F_\VSSShare$, $F_\Multiply$,$F_{\VSSReconst}$)-hybrid model, as $\textsc{InputShare}$ calls those two protocols. Then invoke the modular theorem and lemma 1 to 

\begin{lemma}
Protocol $\textsc{InputShare}$ is UC-Secure
\end{lemma}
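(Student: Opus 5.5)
We prove the lemma in the $(F_{\VSSShare}, F_{\Multiply}, F_{\VSSReconst})$-hybrid model and then apply the modular composition theorem together with the preceding lemma to lift the result to the real protocol.

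\textbf{Target functionality.} First we fix $F_{\textsc{InputShare}}$. It receives $b_j$ from leaf $j$, where a corrupted leaf may supply an arbitrary field element. If $b_j\in\{0,1\}$ it is kept; otherwise it is replaced by $0$. The functionality then hands each committee member $P_i\in\mathcal{C}_j$ its degree-$\tau$ share of the resulting value. For corrupted members, the shares are chosen by the adversary, exactly as in $F_{\VSSShare}$. The only leakage is the single bit ``accepted/replaced'', and only for corrupted leaves. An honest leaf always passes, and in that case the reconstructed value $b(1-b)=0$ is public, so it leaks nothing.

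\textbf{Simulator.} The simulator $\mathcal{S}$ runs $\mathcal{A}$ internally and plays the three hybrid functionalities for it.

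\emph{Honest leaf.} $\mathcal{S}$ obtains from $F_{\textsc{InputShare}}$ the corrupted members' shares, which at most $\tau<m/4$ parties hold. Since $\tau$ points of a random degree-$\tau$ polynomial are uniform, these are distributed identically to the real shares. $\mathcal{S}$ forwards them as the output of $F_{\VSSShare}$. It then emulates $F_{\Multiply}$ by giving the corrupted parties fresh random shares, and emulates $F_{\VSSReconst}$ by outputting $0$.

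\emph{Corrupted leaf.} $\mathcal{S}$ extracts the value $b^*$ that $\mathcal{A}$ inputs to the emulated $F_{\VSSShare}$. This is straight-line, because $\mathcal{S}$ plays that functionality, and the dealer may instead be disqualified, which is treated as $b^*=0$ (canonical sharing). $\mathcal{S}$ forwards $b^*$ to $F_{\textsc{InputShare}}$, emulates the multiplication, and reveals $b^*(1-b^*)$ as the reconstruction output. Whether $b^*\in\{0,1\}$ is then mirrored by the functionality's replacement rule.

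\textbf{Indistinguishability and main obstacle.} We argue perfect indistinguishability. In the hybrid world, every message the adversary sees consists of at most $\tau$ shares of fresh random polynomials together with the public value $b(1-b)$, and $\mathcal{S}$ reproduces each of these with the same distribution. The honest parties' outputs agree because the hybrid functionalities guarantee correctness: $F_{\Multiply}$ yields a valid sharing of $b(1-b)$, and $F_{\VSSReconst}$ returns it despite corrupted shares, since $m\ge 3\tau+1$. Hence the accept/replace decision coincides with the functionality's.

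The expected subtlety is the \emph{replacement with a canonical zero sharing}. We must check that the honest members' final shares, which are deterministic, are consistent with the corrupted members' shares in the ideal world. We would handle this by having the functionality output the canonical zero sharing, whose shares are fixed and public, in the replacement case. The real and ideal executions then match exactly, because the rejection bit is already public.

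With this hybrid security established, the modular composition theorem of~\cite{canettiUC} lets us replace each hybrid functionality by the protocol that realizes it, which the preceding lemma shows is UC-secure. It follows that $\textsc{InputShare}$ UC-realizes $F_{\textsc{InputShare}}$.
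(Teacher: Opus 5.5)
\textbf{The step that fails} is the corrupted-leaf case in which the bit is \emph{accepted}. In the hybrid world a corrupted leaf gives $F_{\VSSShare}$ a polynomial $\phi$ of degree at most $\tau$ of its own choosing. Each honest $P_i \in \mathcal{C}_j$ then holds $\phi(\alpha_i)$. Your $\mathcal{S}$ forwards only the field element $b^* = \phi(0)$, and your $F_{\textsc{InputShare}}$ completes a degree-$\tau$ polynomial through $b^*$ and the $|I|$ corrupted shares. Whenever $|I| < \tau$ (for instance $I = \emptyset$), that completion has free coefficients. So an environment that instructs the leaf to use, say, $\phi(x) = 1 + x$ sees honest outputs $1 + \alpha_i$ in the hybrid world. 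In the ideal world it sees different ones, except with probability inverse in the field size.

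The repair is to let a corrupted leaf input its whole polynomial. This is what $F_{\VSSShare}$ itself does: the dealer supplies the polynomial, and corrupted members do not pick their shares. The functionality then outputs $\phi(\alpha_i)$ if $\deg \phi \le \tau$ and $\phi(0) \in \{0,1\}$, and the canonical zero sharing otherwise. $\mathcal{S}$ forwards the extracted $\phi$ rather than $b^*$. So the consistency issue you anticipated is real, but it arises in the acceptance case, not the rejection case.

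A smaller omission: your functionality delivers per-leaf sharings, while \textsc{InputShare} ends with the local aggregation of step~4. You need to say explicitly that the protocol realizes your functionality followed by this deterministic local sum. Its output sharing is the sum of the leaf polynomials, hence not a fresh one.

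\textbf{Comparison with the paper.} With these repairs your argument is sound, and it takes a genuinely different route. The paper defines $F_{\textsc{InputShare}}$ on the aggregate: it sends $\sigma_d$ to $\mathcal{S}$ and delivers a fresh degree-$\tau$ sharing of it. Its $\mathcal{S}$ runs honest leaves on dummy input $0$ and reconciles the corrupted parties' summed shares with a correction polynomial $g_d$ that vanishes on $I$ and has $g_d(0) = \sigma_d - \sum_{j \in \mathcal{A}_{\mathrm{leaf}}} b^*_{j,d}$. That buys an output that is already a fresh sharing of the partial count, which is the object \Reshare{} consumes and lets the tree-level induction treat every committee's input uniformly.

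Your per-leaf design has three advantages:
\begin{itemize}
\item It needs neither dummy inputs nor a correction polynomial.
\item It makes honest outputs consistent with the corrupted view by construction. In the paper, $f_d$ is sampled without any input from $\mathcal{S}$, so tying it to the corrupted parties' summed shares is left to $g_d$, which lives only inside the simulator.
\item It never gives the partial count $\sigma_d$ to the simulator, whereas the paper's functionality does.
\end{itemize}
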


\begin{proof}[Proof sketch]
The ideal functionality $F_{\textsc{InputShare}}$ (Appendix~\ref{app:inputshare-functionality}) collects honest bits $\mathbf{1}[x_{j,d} < p_{t,d}]$, extracts adversarial bits from $\mathcal{S}$ after a $\{0,1\}$ check, and delivers fresh degree-$\tau$ sharings of the aggregate.

The simulator $\mathcal{S}_{\textsc{InputShare}}$ (Appendix~\ref{app:inputshare-simulator}) runs honest leaves on dummy input $0$, invokes the ideal $F_{\VSSShare}$ and $F_{\Multiply}$, extracts adversarial bits from the corrupt-leaf $F_{\VSSShare}$ calls, and patches the final aggregate with a correction polynomial $g_d$ satisfying $g_d(0) = \sigma_d - \sum_{j \in \mathcal{A}_{\mathrm{leaf}}} b^*_{j,d}$ and $g_d(\alpha_k) = 0$ for $k \in I$.

Indistinguishability follows from the information-theoretic privacy of Shamir sharing with $|I| \leq t$: VSS shares, multiplication outputs, and the final aggregate shares are each uniformly distributed regardless of the underlying secret, so the joint view matches the hybrid execution.
Full proof, functionality, and simulator in Appendix~\ref{app:inputshare-proof}.
\end{proof}

This brings to prove the security of our protocol:

\begin{theorem}
Protocol $\Giskard$ is UC-Secure
\end{theorem}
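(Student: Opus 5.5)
The plan is a hybrid argument. First define an ideal functionality $F_{\Giskard}$. It receives each honest leaf's vector $\mathbf{x}_j$ and, from the simulator, adversarial inputs $\mathbf{x}^*_j$ for corrupted leaves. It then runs the binary search of Algorithm~\ref{alg:giskard-bgw} in the clear for $N_{\text{iter}}$ iterations. At each iteration it leaks to the adversary the global count vector $\boldsymbol{\mathsf{count}}$ and the pivot $\mathbf{p}_{t+1}$, which are exactly the values the Leakage paragraph permits. Finally it outputs $\hat{M}$ to all parties.

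One modelling point needs care. A corrupted leaf may submit a different bit vector in each iteration, not necessarily consistent with any single $\mathbf{x}^*_j$. So I would let the functionality accept, at every iteration $t$, an adversarial bit vector $\mathbf{b}^*_{j}\in\{0,1\}^D$ per corrupted leaf; a non-bit value is replaced by $0$, mirroring $F_{\textsc{InputShare}}$. With this choice the ideal count equals the real one by construction.

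Working in the $(F_{\textsc{BA}},F_{\textsc{InputShare}},F_{\Reshare},F_{\VSSReconst})$-hybrid model, and conditioning on the event of Theorem~\ref{thm:treeConstruction} (every committee has fewer than $m/4$ corruptions, failure probability $n^{-c}$, absorbed as a negligible additive term), I would build the simulator $\mathcal{S}$ iteration by iteration. In Phase~1, $\mathcal{S}$ plays $F_{\textsc{InputShare}}$ toward each base committee. It receives the corrupted leaves' bits, forwards them to $F_{\Giskard}$, and hands corrupted committee members uniformly random degree-$\tau$ shares. Because $|I|\le\tau$ per committee, these shares are distributed identically to real shares, by Shamir privacy.

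In Phase~2, for every $\Reshare$ call, $\mathcal{S}$ plays $F_{\Reshare}$ and again gives corrupted parent members fresh random shares. Fresh randomness in each $F_{\Reshare}$ output means the corrupted members' shares across different committees are independent of the secrets. A party sitting in up to $L$ committees therefore learns nothing by combining them. In Phase~3, $\mathcal{S}$ obtains $\boldsymbol{\mathsf{count}}$ from $F_{\Giskard}$. Exactly as in the InputShare simulator, it samples root-committee honest shares on a degree-$\tau$ polynomial that is consistent with the corrupted shares already issued and has constant term $\boldsymbol{\mathsf{count}}$; it then simulates $F_{\VSSReconst}$ opening it. In Phase~4, the broadcast of $\mathbf{p}_{t+1}$ through $F_{\textsc{BA}}$ is simulated from the leaked pivot.

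Indistinguishability is argued per iteration. By the functionality guarantees of $F_{\Reshare}$ and $F_{\textsc{InputShare}}$, the reconstructed value in the hybrid world equals $\sum_{j\in\honest}\mathbf{1}[x_{j,d}<p_{t,d}]+\sum_{j\in\adversaries}b^*_{j,d}$. Linearity of Shamir sharing under the local sums and the Lagrange recombination \eqref{eq:reshare-recombine} gives this, and it coincides with the ideal count. Hence the pivots, and so the outputs, agree, while every other view element is identically distributed. A hybrid over the $N_{\text{iter}}$ iterations then gives perfect indistinguishability, up to the $n^{-c}$ tree failure and the assumed setup. UC-security of $\Giskard$ itself follows from the composition theorem together with the two preceding lemmas.

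I expect the main obstacle to be the cross-committee, cross-iteration consistency of the corrupted view. Parties belong to several committees and see shares of correlated secrets (a child sum and its parent sum), so one must check that the fresh randomization of $F_{\Reshare}$ makes the joint view a product of independent uniform share sets. My first attempt would be to induct up the tree, showing that each committee's corrupted shares are uniform and independent of everything below, given the functionality's fresh polynomials. A second point to check is that abort or disqualification events leak nothing beyond what the adversary controls.
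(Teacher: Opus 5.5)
Your proposal is correct and follows essentially the same route as the paper. Both work in the hybrid model with the sub-protocol functionalities, which Lemmas 1 and 2 justify. Both simulate the corrupted view level by level up the tree, using the fresh uniform shares output by $F_{\Reshare}$, linearity for the local sums, and the fact that only $\mathsf{count}$ is opened at the root. Both then compose across iterations, because each iteration starts from the same public state (count and pivot). Your explicit functionality $F_{\Giskard}$, which accepts a fresh adversarial bit vector per corrupted leaf in every iteration rather than one fixed input, is a useful precision that the paper's sketch leaves implicit.
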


\begin{proof}[Proof sketch]
By Lemmas~1 and~2, each sub-protocol ($\textsc{BA}$, $\VSSShare$, $\VSSReconst$, $\VSSSubShare$, $\Multiply$, $\textsc{InputShare}$, $\Reshare$) is UC-secure against a static malicious adversary, so it suffices to prove security in the hybrid model where these are replaced by their ideal functionalities.

Following~\citet{SecureShuffling}, but replacing their induction over sorting-circuit gates with a double induction over tree level and median-search iteration, the simulator reproduces the adversary's view layer by layer: $\Reshare$ outputs fresh uniform shares at each parent committee, linear homomorphism handles local aggregation, and $\Reconst$ opens only the public count $\mathsf{cnt}$ at the root.

Across iterations, the broadcast count and deterministic threshold update make each iteration start from an identical public state, so the views compose.
\end{proof}

\subsection{Robustness Properties Against Model Poisonning Attacks}
\label{sec:robustness}
We show that $\Giskard$ satisfies the $(f, \kappa)$-robustness criterion of~\citet{allouah_fixing_2023}, the same guarantee provided by coordinate-wise median against poisoning attacks~\cite{li_rsa_2019}.

We argue per coordinate, since the lift from scalar to $\mathbb{R}^D$ robustness follows from Lemma~2 of~\cite{allouah_fixing_2023}.
Let $\mathcal{H} \subseteq [n]$ be the honest set with $|\mathcal{H}| = n - f$, and let $h_{(1)} \leq \cdots \leq h_{(n-f)}$ denote the sorted honest values, extended by $h_{(k)} := -u$ for $k \leq 0$ and $h_{(k)} := +u$ for $k > n - f$.
At iteration $t$, the protocol tests $p_t = (\mathsf{Left}_t + \mathsf{Right}_t)/2$, and each party $j$ contributes a bit $b_j(p_t) = \mathbf{1}[x_j \leq p_t]$.
The aggregated count decomposes as $C(p_t) = C_H(p_t) + C_B(p_t)$, where $C_H(p_t) = |\{j \in \mathcal{H} : x_j \leq p_t\}|$ is the honest contribution and $C_B(p_t) \in [0, f]$ is the Byzantine contribution, bounded by the zero-knowledge bit-domain proof.
After $N_{\text{iter}} = \lceil \log_2(2u/q) \rceil$ iterations, $\mathsf{Right} - \mathsf{Left} \leq q$, and $\Giskard$ outputs $\hat{M} = (\mathsf{Left} + \mathsf{Right})/2$.

\begin{lemma}[Output Boundedness]
\label{lem:quantile-bracket}
Under $f < n/4$, the output $\hat{M}$ satisfies
\[
h_{(k_1)} - q/2 \;\leq\; \hat{M} \;\leq\; h_{(k_2)} + q/2,
\]
where $k_1 = \lfloor n/2\rfloor + 1 - f$ and $k_2 = \lfloor n/2\rfloor + 1$.
\end{lemma}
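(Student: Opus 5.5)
We argue by a loop invariant on the binary-search bracket. The key quantity is the decision rule at the root. We take it to be: if $C(p_t) \geq \lfloor n/2\rfloor + 1$, set $\mathsf{Right}_{t+1} \gets p_t$; otherwise set $\mathsf{Left}_{t+1} \gets p_t$. The claim is that throughout the search
\[
\mathsf{Left}_t \leq h_{(k_2)} \quad\text{and}\quad \mathsf{Right}_t \geq h_{(k_1)},
\]
with the convention $h_{(k)} = \pm u$ outside $[1,n-f]$. Initially $\mathsf{Left}_1 = -u$ and $\mathsf{Right}_1 = u$. Every value lies in $[-u,u]$, and the extended order statistics are also in $[-u,u]$, so the invariant holds at $t=1$.

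For the inductive step we use the decomposition $C(p_t) = C_H(p_t) + C_B(p_t)$ with $0 \le C_B(p_t) \le f$, which the bit-domain check in $\textsc{InputShare}$ guarantees. There are two cases.

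\textbf{Case 1: $\mathsf{Right}$ moves to $p_t$.} Then $C(p_t) \geq \lfloor n/2\rfloor+1$, so $C_H(p_t) \geq \lfloor n/2\rfloor + 1 - f = k_1$. At least $k_1$ honest values are $\le p_t$, hence $h_{(k_1)} \le p_t = \mathsf{Right}_{t+1}$. If $k_1 \le 0$ the bound is trivial; under $f<n/4$ we have $k_1 \geq n/4 >0$ anyway.

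\textbf{Case 2: $\mathsf{Left}$ moves to $p_t$.} Then $C(p_t) \le \lfloor n/2\rfloor$, so $C_H(p_t) \le \lfloor n/2 \rfloor < k_2$. Fewer than $k_2$ honest values are $\le p_t$, hence $h_{(k_2)} > p_t = \mathsf{Left}_{t+1}$. If $k_2 > n-f$, then $h_{(k_2)} = u \ge p_t$ trivially. The condition $f<n/4$ in fact gives $k_2 \le n-f$, so this sub-case does not arise.

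The endpoint that does not move keeps its bound, so the invariant is preserved.

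After $N_{\text{iter}}$ halvings the bracket satisfies $\mathsf{Right} - \mathsf{Left} \le q$, since $2u/2^{N_{\text{iter}}} \le q$. The output $\hat{M}$ is the midpoint, so $\hat M \le \mathsf{Left} + q/2 \le h_{(k_2)} + q/2$ and $\hat M \ge \mathsf{Right} - q/2 \ge h_{(k_1)} - q/2$, which is the claimed bracket. The condition $f<n/4$ is used only to guarantee $1 \le k_1 \le k_2 \le n-f$, so that both brackets are genuine honest order statistics and not the $\pm u$ padding.

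The main obstacle is a mismatch in the paper's conventions. Phase~1 defines bits by $x<p$, while the robustness section uses $x \le p$, and the root's update threshold is never stated explicitly. The invariant argument needs the non-strict convention $\mathbf{1}[x\le p]$ and the threshold $\lfloor n/2\rfloor+1$. We would fix these explicitly at the start of the proof. We would then note that with the strict inequality the same argument goes through with $h_{(k_1)} < p_t$ replaced by $h_{(k_1)} \le p_t$, so ties at the pivot do not affect the bounds.
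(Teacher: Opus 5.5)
Your proof is correct and matches the paper's argument: the paper reasons about the last iteration at which $\mathsf{Right}$ (resp.\ $\mathsf{Left}$) moved, using the rule $C(p)>n/2$ (equivalent to your $C(p)\geq\lfloor n/2\rfloor+1$ for integer counts), which is your loop invariant, and your version also covers the case where an endpoint never moves. One small slip in your closing remark: under the strict convention $\mathbf{1}[x<p]$ it is Case~2 that weakens to $h_{(k_2)}\geq p_t$ while Case~1 becomes strict, so neither convention is actually required, and the invariant $\mathsf{Left}\leq h_{(k_2)}$, $\mathsf{Right}\geq h_{(k_1)}$ holds under either.
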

\begin{proof}
$\mathsf{Right}$ was last updated at an iteration with $C(p) > n/2$.
Since $C_B \leq f$, $C_H(p) \geq C(p) - f > n/2 - f$, so $\mathsf{Right} \geq h_{(k_1)}$.
$\mathsf{Left}$ was last updated at an iteration with $C(p) \leq n/2$.
Since $C_B \geq 0$, $C_H(p) \leq n/2$, so $\mathsf{Left} < h_{(k_2)}$.
With $\mathsf{Right} - \mathsf{Left} \leq q$ and $\hat{M} = (\mathsf{Left} + \mathsf{Right})/2$, the claim follows.
\end{proof}

\begin{theorem}[ $(f, \kappa)$-robustness]
\label{thm:scalar-robustness}
Under $f < n/4$ and termination precision $q^2 \leq \sigma_{\mathcal{H}}^2$, \Giskard{} robust aggregation function is $(f, \kappa)$-robust with
\begin{equation*}
\kappa \;=\; \frac{2(n-f)}{\lceil n/2 \rceil - f} + \tfrac{1}{2}.
\end{equation*}
\end{theorem}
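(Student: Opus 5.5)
We argue in the scalar case throughout, since Lemma~2 of~\cite{allouah_fixing_2023} lifts a scalar $(f,\kappa)$ bound coordinate-wise to $\mathbb{R}^D$ with the same $\kappa$. Fix an arbitrary honest set $S$ with $|S| = n-f$, write $\bar h = \overline{x_S}$ and $\sigma_{\mathcal{H}}^2 = \frac{1}{n-f}\sum_{i\in S}(x_i-\bar h)^2$. We read the statement in the standard squared form $|\hat M - \bar h|^2 \le \kappa\,\sigma_{\mathcal{H}}^2$, consistent with Definition~\ref{def:robustagg}. The starting point is Lemma~\ref{lem:quantile-bracket}, which places $\hat M$ within $q/2$ of the honest order-statistic window $[h_{(k_1)}, h_{(k_2)}]$, with $k_1 = \lfloor n/2\rfloor+1-f$ and $k_2 = \lfloor n/2\rfloor+1$. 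The plan is to bound separately the distance from $\bar h$ to this window and the precision slack $q/2$, and then combine the two.

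\textbf{Step 1: window to mean.} Set $M^\ast = \max\{|h_{(k_1)}-\bar h|, |h_{(k_2)}-\bar h|\}$. Suppose first that $h_{(k_2)} \ge \bar h$. Then every honest value $h_{(j)}$ with $j \ge k_2$ satisfies $h_{(j)}-\bar h \ge h_{(k_2)}-\bar h \ge 0$. There are $n-f-k_2+1 \ge \lceil n/2\rceil - f$ such indices. Summing squares over them gives
\[
(\lceil n/2\rceil - f)\,(h_{(k_2)}-\bar h)^2 \le (n-f)\sigma_{\mathcal{H}}^2 .
\]
Symmetrically, if $h_{(k_1)} \le \bar h$, every honest value with index $\le k_1$ lies below $h_{(k_1)}$. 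There are $k_1 = \lfloor n/2\rfloor + 1 - f \ge \lceil n/2\rceil - f$ of them, so the same bound holds for $(h_{(k_1)}-\bar h)^2$. If the whole window lies on one side of $\bar h$, the far endpoint is handled by one of these two cases, which covers every configuration. The result is
\[
(M^\ast)^2 \le \frac{n-f}{\lceil n/2\rceil - f}\,\sigma_{\mathcal{H}}^2 .
\]
The hypothesis $f<n/4$ guarantees that the denominator is positive, indeed of order $n/4$.

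\textbf{Step 2: combine.} From Lemma~\ref{lem:quantile-bracket} we have $|\hat M-\bar h| \le M^\ast + q/2$. Apply $(a+b)^2 \le 2a^2+2b^2$ to get
\[
|\hat M-\bar h|^2 \le 2(M^\ast)^2 + q^2/2 .
\]
Substituting Step 1 and the hypothesis $q^2 \le \sigma_{\mathcal{H}}^2$ yields
\[
|\hat M-\bar h|^2 \le \Big(\frac{2(n-f)}{\lceil n/2\rceil-f}+\tfrac12\Big)\sigma_{\mathcal{H}}^2 ,
\]
which is exactly the claimed $\kappa$. The factor $2$ comes from the elementary inequality and the $\tfrac12$ from the slack term.

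\textbf{Main obstacle.} The delicate step is the counting in Step 1. It must be checked carefully at both ends: that at least $\lceil n/2\rceil-f$ honest indices lie beyond each endpoint, including the floor/ceiling parity cases, and that the padding conventions $h_{(k)} = \pm u$ never bite. The padding is harmless because $f<n/4$ forces $1 \le k_1 \le k_2 \le n-f$. Lemma~\ref{lem:quantile-bracket} also requires care with the tie convention $\le$ versus $<$ in $b_j$. If the counts come out one short, the fallback is to redo Step 1 with $\lfloor n/2\rfloor - f$ and absorb the difference, since the stated $\kappa$ uses the ceiling.
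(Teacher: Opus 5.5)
Your proposal is correct and follows essentially the same route as the paper: you use Lemma~\ref{lem:quantile-bracket} to place $\hat M$ within $q/2$ of the window $[h_{(k_1)},h_{(k_2)}]$, count at least $\lceil n/2\rceil-f$ honest values beyond each side to get $\kappa_0=(n-f)/(\lceil n/2\rceil-f)$, and then apply $(a+b)^2\le 2a^2+2b^2$ and $q^2\le\sigma_{\mathcal{H}}^2$; the only cosmetic difference is that you bound the two endpoints and take the maximum, whereas the paper bounds an arbitrary $v$ in the window. Your Step~1 counts are exact ($n-f-k_2+1=\lceil n/2\rceil-f$ and $k_1\ge\lceil n/2\rceil-f$), so the fallback in your last paragraph is never needed, and it would not work anyway, since a floor-based count gives a larger constant that cannot be absorbed into the ceiling-based $\kappa$.
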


\begin{proof}[Proof sketch]
Lemma~\ref{lem:quantile-bracket} writes $\hat{M} = v + \delta$ with $v$ between the order statistics $h_{(k_1)}$ and $h_{(k_2)}$ ($k_1 = \lfloor n/2\rfloor + 1 - f$, $k_2 = \lfloor n/2\rfloor + 1$) and $|\delta| \leq q/2$.
A counting argument gives at least $\lceil n/2\rceil - f$ honest values on each side of $v$, yielding $(v - \bar{x}_{\mathcal{H}})^2 \leq \kappa_0 \sigma_{\mathcal{H}}^2$ with $\kappa_0 = (n-f)/(\lceil n/2\rceil - f)$.
Combining with $|\delta| \leq q/2$ via $(a+b)^2 \leq 2a^2 + 2b^2$ and $q^2 \leq \sigma_{\mathcal{H}}^2$ gives the stated $\kappa$; see Appendix~\ref{app:robustness-proof}.
\end{proof}

\subsubsection{Communication Complexity}

Let $\nu(m) = O(m^2)$ denote the per-party communication complexity of 
\textsc{VSS-Share} among $m$ parties (Theorem~2). The per-party costs of 
each sub-protocol taken in isolation are:
\begin{itemize}
\item \textbf{\textsc{VSS-Share}} (single instance, $m$ parties): 
  $O(m^2)$.
\item \textbf{\textsc{Multiply}} (single instance, $\tau < m/4$): 
  $O(m^2)$ per party via $\VSSSubShare$ applied directly to the local 
  products $a_i b_i$ (Asharov--Lindell, Appendix~A); total network cost 
  $O(m^3)$.
\item \textbf{\textsc{InputShare}} (per base committee): a committee 
  party handles $n/k^L$ leaves, each requiring one \textsc{VSS-Share} 
  and one \textsc{Multiply} for bit verification, giving 
  $O\!\left((n/k^L) \cdot m^2\right)$ per party. Local aggregation of 
  validated shares is free by additive homomorphism.
\item \textbf{\textsc{Reshare}} (child $\to$ parent): $O(m^2)$ per party 
  on both sides; a parent receiving from $k$ children pays $O(km^2)$.
\item \textbf{\textsc{VSS-Reconst}} (committee output): $O(m)$ per party.
\end{itemize}

\begin{theorem}[Communication Complexity of Giskard]
\label{thm:comm-bgw}
Let $N_{\mathrm{iter}}$ denote the number of iterations for median search, and let $d$ denote
the model dimension. Under the tree parameters of
Theorem~\ref{thm:treeConstruction}, with $\Multiply$ at threshold $\tau < m/4$,
constant branching factor $k = O(1)$, and each party assigned to at most one
committee per tree level:
\begin{enumerate}[label=(\roman*)]
  \item \textbf{Per-party cost per committee membership per iteration (single-coordinate):}
  $\Delta_{\max} = O(\log^2 n)$.

  \item \textbf{Total per-party cost:}
  $\Delta_{\max}^{\mathsf{eff}} = O(dN_{\mathrm{iter}}\log^{3} n)$.

  \item \textbf{Total network cost:}
  $\mathsf{Total} = O(dN_{\mathrm{iter}}n\log^{3} n)$.
\end{enumerate}
\end{theorem}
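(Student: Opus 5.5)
The proof will be a direct accounting argument built on the per-sub-protocol costs listed just before the statement and on Theorem~\ref{thm:treeConstruction}, which gives $m = O(\epsilon^{-2}\ln n) = O(\log n)$ for constant $\epsilon$. First we fix the tree parameters. The base load $n/k^L$ must be $O(1)$, so we choose $L = \lceil \log_k n \rceil$, giving depth $L = O(\log n)$ for constant $k$. This is the choice consistent with ``every party occupies exactly one leaf'' and with committees of $k$ children.

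For (i), we bound the cost of one committee role in one iteration on a single coordinate, and we do it by role. A base-committee member runs $n/k^L = O(1)$ \VSSShare{} instances and $O(1)$ \Multiply{} instances for the bit-domain proofs, plus $O(1)$ \VSSReconst{} openings of the products. Each costs $O(m^2)$ or $O(m)$, so the total is $O(m^2)$. An intermediate member acts as receiver in $k$ \Reshare{} calls, costing $O(km^2)$, and as sender in one, costing $O(m^2)$. Since $k = O(1)$, this is $O(m^2)$. A root member additionally pays $O(m)$ for \VSSReconst{} and $O(m)$ per recipient to disseminate $\mathbf{p}_{t+1}$ to the child committees. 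Dissemination is a broadcast down the tree, so each committee forwards to $k$ children of size $m$, costing $O(km)$ per party. Every role is therefore $O(m^2) = O(\log^2 n)$, which gives $\Delta_{\max}$.

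For (ii), we multiply by three factors. The number of committee memberships per party is at most $L = O(\log n)$, by the one-committee-per-level property of \BuildTree{}. The number of coordinates is $d$, since all sub-protocols run coordinate-wise in parallel and their costs add linearly. The number of iterations is $N_{\mathrm{iter}}$. A leaf's own cost, one \VSSShare{} per coordinate per iteration at $O(m^2)$, is dominated by this product. The result is $O(dN_{\mathrm{iter}}\log^3 n)$. For (iii), we sum $\Delta_{\max}^{\mathsf{eff}}$ over the $n$ parties. Equivalently, there are $O(n)$ committees (a geometric sum $\sum_\ell k^{L-\ell} = O(n/k^{1})$), each with $m$ members paying $O(m^2)$ per coordinate per iteration, for a total of $O(nm^3)$ per coordinate and iteration. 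Multiplying by $dN_{\mathrm{iter}}$ gives the claimed $O(dN_{\mathrm{iter}} n\log^3 n)$.

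The main obstacle is justifying that no role exceeds $O(m^2)$. Two points need care. One is the \Multiply{} bound at $\tau<m/4$, where we rely on the Asharov--Lindell Appendix~A optimization that applies \VSSSubShare{} directly to the products. The other is the claim that base committees serve only $O(1)$ leaves, which requires the tree to be deep enough that $n/k^L=O(1)$. If instead one only has $n/k^L = O(\log n)$, the base-committee term would be $O(\log^3 n)$ per iteration. To keep (i) exact, we would make the choice $L=\log_k n$ explicit at the start. We would also note that the $O(1)$ setup cost of \BuildTree{} (polylog by Lemma~\ref{thm:quorumSize}) is one-time and excluded from the per-iteration accounting.
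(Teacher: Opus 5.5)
Your accounting follows the paper's proof almost step for step. The paper balances the base-committee \textsc{InputShare} cost $(n/k^L)m^2$ against the \Reshare{} cost $km^2$ to get $k^{L+1}=n$, which is your choice $n/k^L=O(1)$. It then multiplies $\Delta_{\max}$ by $dLN_{\mathrm{iter}}$ using one committee per level, absorbs the leaf term, and sums $O(n)$ \Reshare{} instances and $n$ bit checks at $O(m^3)$ each. Your role-by-role version, including the root's reconstruction and dissemination, is the same argument made more explicit.

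The gap is in the step you yourself singled out, and fixing $L=\lceil\log_k n\rceil$ does not close it. That choice contradicts the hypothesis you use in (ii), and also \BuildTree{}, where the level-$\ell$ committees are disjoint $m$-blocks of $\pi_\ell$. If every party sits in at most one committee per level, then a level holds at most $n/m$ committees. By pigeonhole, some base committee serves at least $m$ leaves. Each of its members then runs at least $m$ bit-domain \Multiply{} calls at $O(m^2)$ each. This $O(m^2)$ is not slack, since in BGW subsharing each party receives $m$ VSS instances of size $\Theta(m)$. So your accounting certifies only $O(m^3)=O(\log^3 n)$ per base membership, not $O(\log^2 n)$. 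Conversely, $n/k^L=O(1)$ with $k=O(1)$ means $\Theta(n)$ base committees, i.e.\ $\Theta(nm)$ base-level seats for $n$ parties. That forces parties into $\Theta(\log n)$ committees on the same level. The paper's proof makes exactly the same balancing move, so you inherit this inconsistency rather than introduce it. Still, your proof of (i) as written does not go through.

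There are two ways to salvage it:
\begin{itemize}
\item Keep one committee per level with base load $\Theta(m)$. A party then pays $O(m^3)$ at its base membership plus $O(km^2)$ at each of $O(\log n)$ higher levels. This still gives (ii), namely $O(\log^3 n)$ per party, and (iii), namely $O(nm^3)$ in total, per coordinate and iteration. Only (i) weakens, to $O(\log^3 n)$ for base members.
\item Keep $O(1)$ base load and drop the one-per-level assumption. Bound each party's number of memberships by $O(m)$ by counting the $\sum_\ell k^{L-\ell}m=O(nm)$ seats under a balanced assignment, or a random one w.h.p. This recovers all three bounds, but under a hypothesis different from the one the theorem states.
\end{itemize}
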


\begin{proof}[Proof sketch]
(i) Balancing the \textsc{InputShare} cost $O((n/k^L)m^2)$ at leaves against the \textsc{Reshare} cost $O(km^2)$ at intermediate committees gives $k^{L+1} = n$; with $k = O(1)$ and $m = O(\log n)$, $\Delta_{\max} = O(\log^2 n)$.

(ii) Each party sits on at most $L = O(\log n)$ committees per iteration, so $\Delta_{\max}^{\mathsf{eff}} = O(dN_{\mathrm{iter}}L \Delta_{\max}) = O(dN_{\mathrm{iter}}\log^3 n)$, absorbing the smaller leaf term.

(iii) Summing \textsc{InputShare}, bit verification, and the $O(n)$ \textsc{Reshare} instances over $d$ coordinates and $N_{\mathrm{iter}}$ iterations with $m = O(\log n)$ yields $O(dN_{\mathrm{iter}}n\log^3 n)$.

Full accounting in Appendix~\ref{app:comm-proof}.
\end{proof}

\paragraph{\textbf{Comparison with baseline topologies.}}
We compare three topologies for running the median-search iteration, all using the same VSS and Multiply primitives: All-to-all (A2A) as in~\citet{ghavamipour_privacy-preserving_2024}, All-to-committee (A2C) used by AlphaFL~\cite{cryptoeprint:2025/1289} and the work from~\citet{francez}, and \Giskard{}. Per-party cost is reported as the maximum over all parties; for A2C this corresponds to a committee member. Both A2C and \Giskard{} use committee size $m = O(\log n)$, sized via Chernoff and union bound to tolerate a constant fraction of corruptions. We use threshold of $\tau < m/4$ for all topologies; \Giskard{} uses constant branching factor $k$, the comparison is done assuming that all those topologies implement the binary search using BGW primitives.
Table~\ref{tab:comm-comparison} summarizes the worst case per-party  communication cost.

\begin{table}[ht]
\centering
\caption{Worst-case per-party and total communication per median-search iteration.}
\label{tab:comm-comparison}
\small
\begin{tabular}{lccc}
\toprule
& \textbf{Per-party cost} & \textbf{Total network cost} & \textbf{Rounds} \\
\midrule
A2A                & $O(dn^{2})$         & $O(dn^{3})$         & $O(1)$ \\
A2C               & $O(dn\log^{2} n)$   & $O(dn\log^{2} n)$   & $O(1)$ \\
\textbf{Giskard}   & $O(d\log^{3} n)$    & $O(dn\log^{3} n)$   & $O(\log n)$ \\
\bottomrule
\end{tabular}
\end{table}

\section{Experimental Study}

\label{sec:exp} 

In this section, we evaluate Giskard from two complementary angles. First, we study its scalability by comparing its communication and latency costs with MPC-based decentralized competitors (A2C and A2A). Second, we assess whether the binary search median formulation proposed for \Giskard{} preserves model utility under Byzantine attacks. To evaluate \Giskard{}, we implemented a prototype in Rust using the arkworks library~\cite{arkworks}. Cleartext baselines were obtained with a modified version of the ByzFL framework in Python~\cite{gonzalez2025byzflresearchframeworkrobust}. All experiments were run on a workstation with an Intel Core i9-13950HX (24 cores, 2.2 GHz base), 128 GB of DDR5 RAM, and an NVIDIA RTX 5000 Ada Generation GPU, running Ubuntu 24.04.

Our evaluation is guided by the following research questions:

\begin{itemize}
\item \textbf{RQ1}: What is the impact of using \Giskard{} compared to A2A and A2C topologies on per-party communication cost?
\item \textbf{RQ2}: What is the impact of the network size $n$ on per-party cost and end-to-end latency for each topology?
\item \textbf{RQ3}: Is \Giskard{}'s binary-search median competitive with state-of-the-art robust aggregators under model poisoning attacks?
\item \textbf{RQ4}: What is the impact of the median-search iteration count $N_{\mathrm{iter}}$ on the cost/robustness tradeoff?
\end{itemize}

\subsection{Comparing \Giskard{} per-party communication cost to A2A and A2C (RQ1)}

To start our evaluation, we want to emphasize why we propose the \Giskard{} tree topology for MPC-based robust aggregation.
To do that, we show in Figure~\ref{fig:exp2} the worst-case per-party communication cost as a function of the Byzantine fraction $f/n$, for A2A, A2C, and \Giskard{} on networks of size $n \in \{10^3, 10^4, 10^5, 10^6\}$.
Each point represents the bytes a single honest party must send and receive to complete one robust aggregation, with the x-axis showing $f/n$ and the y-axis showing the per-party cost in bytes on a log scale.

\textbf{Methodology:}
For each $n$ and Byzantine fraction $f/n$, we first choose the smallest committee size $m$ satisfying the target failure probability $\delta = 10^{-5}$, then initialize the fixed tree topology for $\Giskard$ and the A2C single committee accordingly. A2A consists of $n$ fully connected parties regardless of $f$. We estimate per-party communication cost in two steps. First, a Rust implementation over the arkworks library~\cite{arkworks} measures the cost of each MPC building block ($\VSSShare$, $\Multiply$, $\Reshare$) by counting every field element sent, received, or broadcast. Second, we count how many times each topology invokes each block. For A2A with $n > 100$, the MPC instance is too large to simulate, and we use its closed-form cost expression.

\textbf{Per-party cost composition:}
\begin{itemize}
  \item \textbf{A2A:} every party is a member of the size-$n$ committee and runs one full Boolean-check block over all $n$ bits per iteration.
  \item \textbf{A2C:} the worst-case party is a committee member, running the same Boolean-check block among $m \ll n$ parties over all $n$ bits. Non-committee parties only act as input dealers and pay one \VSSShare{} per iteration.
  \item \textbf{\Giskard{}:} the worst-case party is simultaneously an input dealer (one \VSSShare{}), a member of one base committee (one Boolean-check block over the leaves of that cluster), a member of one committee at each of the $L-2$ intermediate levels (each lifts $k$ child shares into the parent via \Reshare{}), and a member of the root committee (which performs the $k$ final \Reshare{} calls and runs the global reconstruction).
\end{itemize}
The total per-party cost is the sum of these contributions multiplied by $N_{\mathrm{iter}}$.

\textbf{Note:} We implement the distributed binary-search median for A2A and A2C, rather than the generic median circuit that could be applied in their original works (A2A from~\citet{ghavamipour_privacy-preserving_2024}, A2C from AlphaFL~\cite{cryptoeprint:2025/1289} or~\citet{francez}). Our attempts to instantiate the median with generic MPC sorting and comparison circuits led to prohibitive cost blow-ups that would make the baselines look worse. We therefore give the baselines the same binary-search median formulation as $\Giskard$, in order to show that even though this formulation reduces costs for all topologies, it benefits $\Giskard$ more because its tree structure distributes the resulting counting operations across many small committees.

\textbf{Parameters:}
We sweep $n \in \{10^2, 10^3, 10^4, 10^5, 10^6\}$ and $f/n \in \{0.05, 0.10, 0.20\}$.
For each $(n, f)$, the committee size $m$ is the smallest value satisfying $\delta = 10^{-5}$ union-bounded across \Giskard{}'s tree committees (and to generate one committee for A2C); tree parameters $k, L$ are chosen to minimize per-party cost.
We fix $N_{\mathrm{iter}} = 10$, the field element size at 32 bytes, and the aggregated dimension at one (scalar median).

\textbf{Results:}
Our results show that \Giskard{} pays a
per-party cost orders of magnitude lower than both baselines, especially
when $n \gg m$. 
At $f/n = 0.10$, \Giskard{}'s per-party cost grows from $8.5$~GB at $n = 10^3$ to $77$~GB at $n = 10^6$. It represents a $9{\times}$ increase for a $10^3{\times}$ larger network, against $\sim 1200{\times}$ for A2C and $\sim 10^9{\times}$ for A2A.
Furthermore, at $n = 10^6$, the per-party cost of \Giskard{} ranges from $3.5$~GB to $870$~TB across $f/n \in \{0.05, 0.10, 0.20\}$, against $6.2$~TB to $69$~PB for A2C and above $4 \cdot 10^{25}$ bytes for A2A.
At $f/n = 0.05$, this corresponds to a $1775\times$ reduction of \Giskard{} over A2C.
We explain this advantage by the fact that \Giskard{} distributes the MPC work across a tree of small committees, decoupling per-party cost from $n$ beyond a logarithmic factor, whereas A2C concentrates the load on a single committee and A2A scales committees with $n$.

\begin{takeaway}
    \textbf{Takeaway}: Distributing the MPC work across a tree of small committees, as in \Giskard{}, offers a better per-party communication profile than A2A or A2C.
\end{takeaway}

\begin{figure}[t]
  \centering
  \includegraphics[width=\columnwidth]{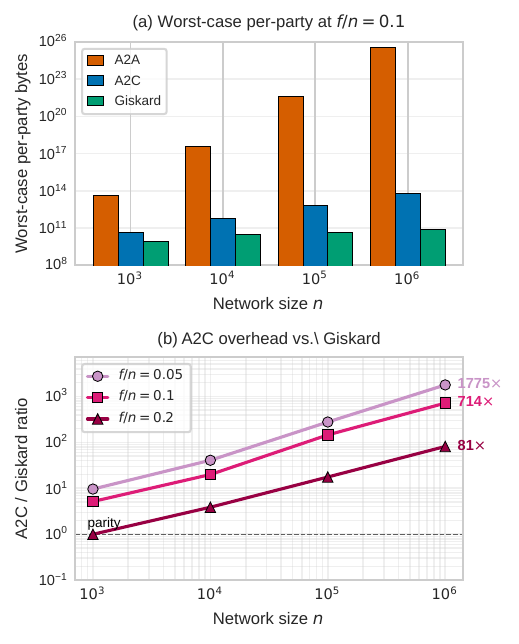}
  \caption{Per-party communication cost. (a)~Absolute bytes at $f = 0.20$ for the three topologies. (b)~A2C overhead relative to \Giskard{} across three Byzantine fractions. \Giskard{} reduces per-party bandwidth by up to $1775\times$ compared to A2C at $n = 10^6$ with $f/n = 0.05$, with the advantage growing as the network scales.}
  \label{fig:exp2}
\end{figure}

\subsection{Impact of Network Size on Total Cost and Latency (RQ2)}

In this experiment, we aim to study how each topology scales with the network size $n$, both in total communication and in end-to-end wall-clock time.
The results are presented in Figure~\ref{fig:exp3} for total bytes versus $n$ at $f/n \in \{0.05, 0.10, 0.20\}$, and in Figure~\ref{fig:exp4} for the corresponding minimum wall-clock time at $f/n = 0.10$ under a 1\,Gbps per-party bandwidth and a 1\,s round-trip delay, adding in time from executed local computation.
Each point in Figure~\ref{fig:exp3} represents the total bytes exchanged across all parties for instance of the distributed median binary search, with the x-axis showing $n$ and the y-axis showing total bytes on a log scale.
Each point in Figure~\ref{fig:exp4} represents the minimum end-to-end latency under the assumption that compute and transfer overlap within a round and same-layer committees in \Giskard{} execute concurrently, and that local computation is done with one thread, namely the boolean verification are done sequentially.

Interestingly, in total bytes A2C is strictly below \Giskard{} across the range, because A2C concentrates work on a small committee while \Giskard{} distributes it across all $n$ parties.
However the concentrated cost in A2C exceeds any practical per-committee bandwidth at $n = 10^6$ (Figure~\ref{fig:exp2}).
At $f/n = 0.10$, our model places \Giskard{}'s wall-clock in the minutes-to-hours range across the tested $n$, A2C in the multi-hour to multi-day range from $n = 10^5$ on, and A2A beyond any practical horizon already at $n = 10^4$.
At small $n$ ($n = 10^2$) the three topologies fall within a small constant factor of each other, reflecting that topology choice matters only when $n \gg m$.

\begin{takeaway}
    \textbf{Takeaway:} Only \Giskard{} keeps both per-party cost and end-to-end latency within practical budgets up to $n = 10^6$.
\end{takeaway}

\begin{figure}[t]
  \centering
  \includegraphics[width=0.87\columnwidth]{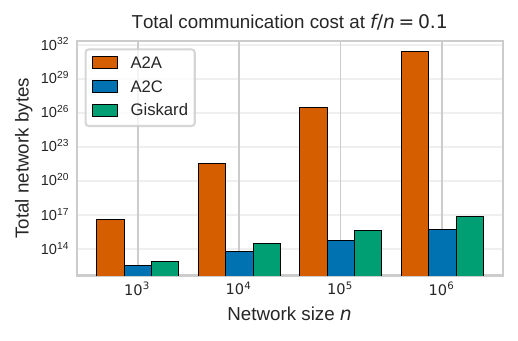}
  \caption{Total communication cost at $f/n = 0.10$ for the three topologies.}
  \label{fig:exp3}
\end{figure}

\begin{figure}[t]
  \centering
  \includegraphics[width=0.87\columnwidth]{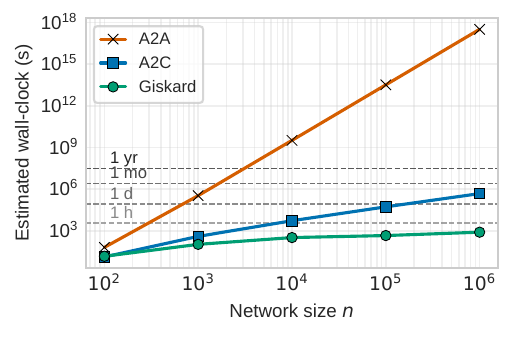}
  \caption{Extrapolated wall-clock time to completion versus network size $n$ at $f/n = 0.10$, under 1\,Gbps per-party bandwidth and 1\,s round-trip delay. Dashed lines mark sample durations at 1\,hour, 1\,day, 1\,month, and 1\,year.}
  \label{fig:exp4}
\end{figure}

\subsection{Comparing \Giskard{}'s Median to Baseline Aggregators under Poisoning (RQ3)}
\label{subsec:rq2}

In this experiment, we want to evaluate whether that \Giskard{}'s binary-search median, tailored to an MPC implementation, does not sacrifice robustness compared to standard robust aggregators.
We run this evaluation on a fork of the ByzFL Python framework~\cite{gonzalez2025byzflresearchframeworkrobust} in the cleartext domain, with $n = 100$ parties ($f = 25$ Byzantine) and a Dirichlet non-i.i.d.\ partition ($\alpha = 1.0$). The party count is bounded by standard FL benchmark sizes.
We show in Figures~\ref{fig:acc_grid_mnist_cnn_mnist_f4} and~\ref{fig:acc_grid_cifar10_cnn_cifar_f4} the global test accuracy across training rounds for our binary-search median and two state-of-the-art robust aggregators, under four model-poisoning attacks: Label Flipping~\cite{allen-zhu_byzantine-resilient_2021}, Sign Flipping~\cite{allen-zhu_byzantine-resilient_2021}, IPM~\cite{xie_fall_2019}, and ALIE~\cite{baruch_little_2019}.
Each curve in the plots represents the mean test accuracy across 5 independent seeds, with the x-axis showing the training round and the y-axis showing the global test accuracy.

Our results show that, across both datasets and all four attacks, \Giskard{}'s binary-search median tracks the accuracy of cleartext robust aggregators within a small final-round gap. 
On MNIST, the gap between \Giskard{} robust aggregation function and the other stays below $0.4\%$ on every attack. 
On CIFAR10, the gap stays below $1.5\%$ for attacks such as Label Flipping and Sign Flipping, while on Label Flipping \Giskard{} edges out the cleartext median by $0.8\%$ and reaches at most $2.7\%$ for optimized attacks such as ALIE and IPM, which craft updates close to the honest mean and therefore demand finer median precision. We explain this behavior by the fact that the binary-search median converges to the true coordinate-wise median up to the search precision; once $N_{\mathrm{iter}}$ is sufficient, the residual error stays below the variance of honest updates, putting the aggregator in the regime $q^2 \leq \sigma^2_{\mathcal{H}}$ identified in Theorem~\ref{thm:scalar-robustness}.

\begin{takeaway}
    \textbf{Takeaway}: \Giskard{}'s MPC-friendly binary-search median is competitive with cleartext robust aggregators under standard model-poisoning attacks.
\end{takeaway}

\begin{figure}[ht]
    \begin{minipage}{\columnwidth}
    \centering
    \includegraphics[width=0.90\textwidth]{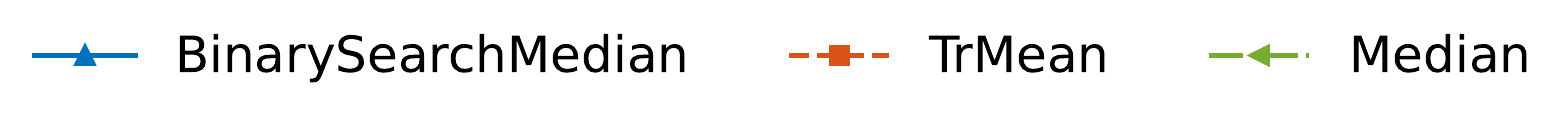}
    \\
    \includegraphics[width=0.85\columnwidth]{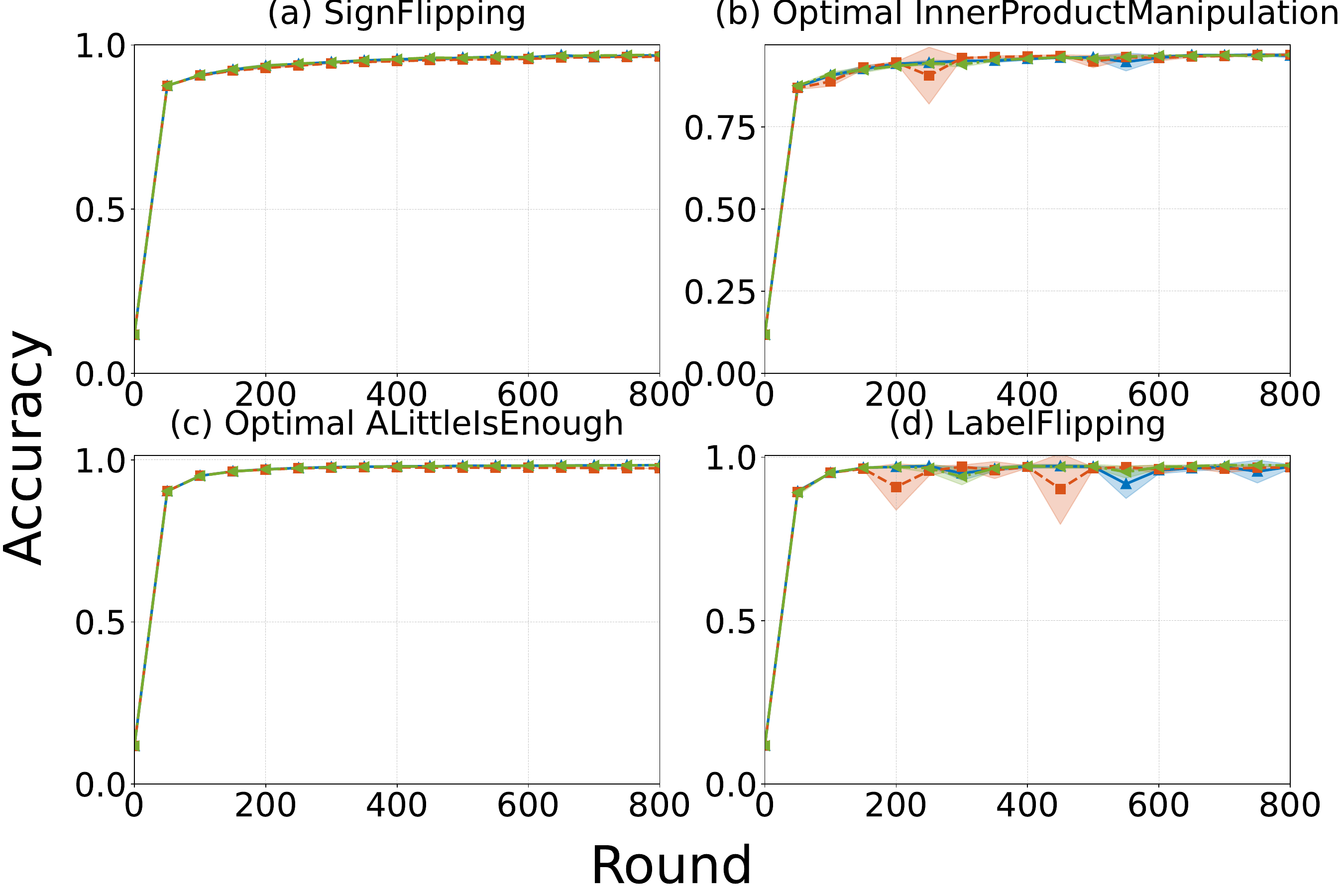}
    \end{minipage}
    \caption{Global test accuracy under four Byzantine attacks on MNIST ($n=100$ including $f=25$) in a non-\iid setup ($\alpha = 1.0$), comparing the \Giskard{} binary-search median against two robust aggregation rules.}
    \label{fig:acc_grid_mnist_cnn_mnist_f4}
\end{figure}

\begin{figure}[ht]
    \begin{minipage}{\columnwidth}
    \centering
    \includegraphics[width=0.90\textwidth]{plots/legend_mnist_cnn_mnist_n_14_f_4_iid_None__nb_honest_10_tolerated_f_equal_real.pdf}
    \\
    \includegraphics[width=0.85\columnwidth]{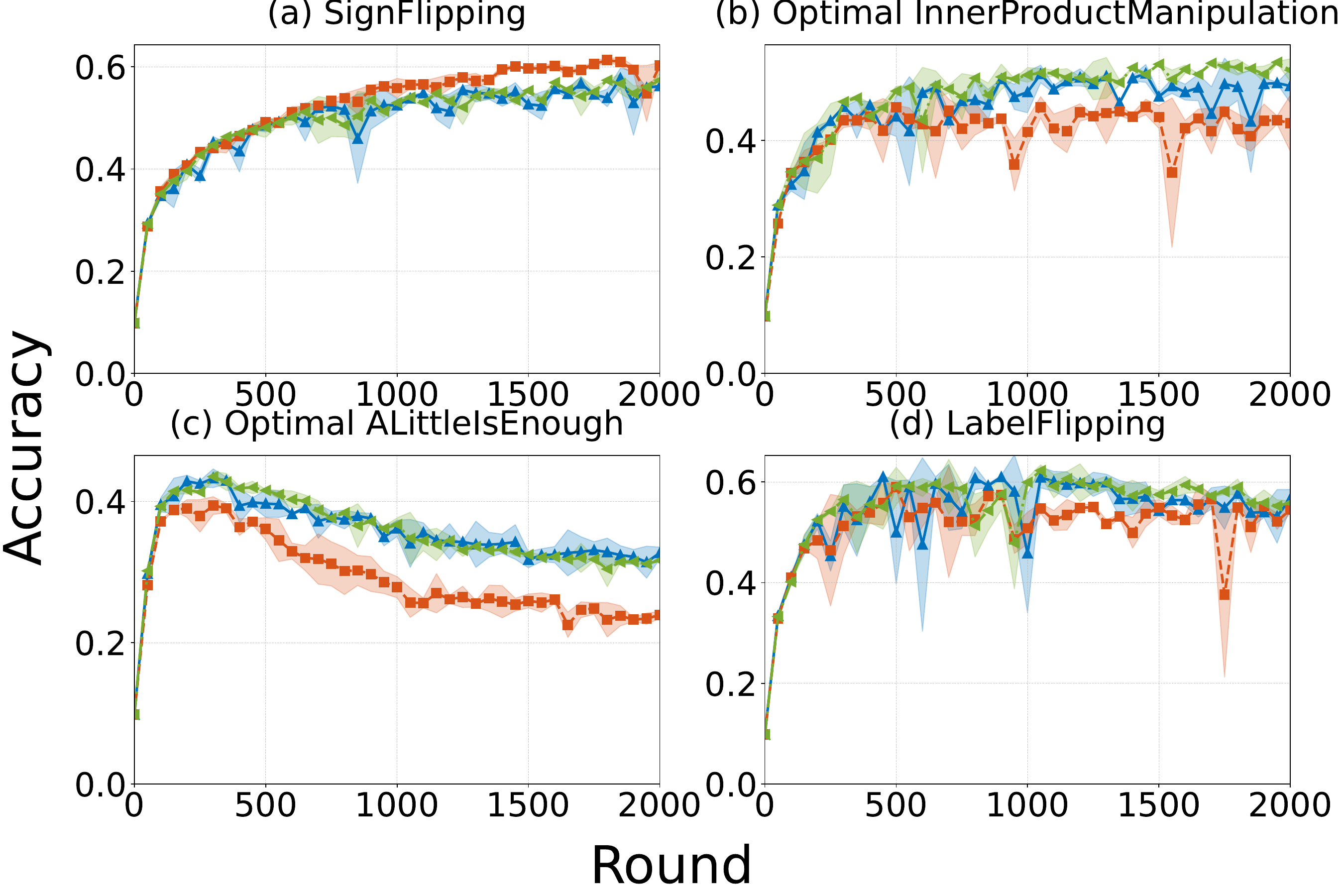}
    \end{minipage}
    \caption{Global test accuracy under four Byzantine attacks on CIFAR-10 ($n=100$ including $f=25$) in a non-\iid setup ($\alpha = 1.0$), comparing the \Giskard{} binary-search median against two robust aggregation rules.}
    \label{fig:acc_grid_cifar10_cnn_cifar_f4}
\end{figure}

\subsection{Impact of the Median-Search Iteration Count on the Cost/Robustness Tradeoff (RQ4)}

In this experiment, we aim to study the impact of the median-search iteration count $N_{\mathrm{iter}}$ on \Giskard{}'s cost/robustness tradeoff, since the per-party cost scales linearly in $N_{\mathrm{iter}}$ and over-provisioning this parameter wastes communication.
The results are presented in Figure~\ref{fig:acc_grid_cifar10_cnn_cifar_iteration} as a heatmap of final test accuracy at round 800 on MNIST with $n = 100$, $f = 25$, and $\alpha = 1.0$.
The x-axis shows $N_{\mathrm{iter}} \in \{1, 5, \ldots, 40\}$, the y-axis shows the attack type, and the cell color encodes the final test accuracy.

The results show three regimes.
For $N_{\mathrm{iter}} \leq 6$ the search has not converged and accuracy stays at chance level across all attacks.
For $7 \leq N_{\mathrm{iter}} \leq 9$ accuracy depends on the attack type, with attacks crafted close to the honest mean (ALIE, IPM) requiring finer precision than coarser ones (Label Flipping, Sign Flipping).
For $N_{\mathrm{iter}} \geq 10$ accuracy saturates near the no-attack baseline regardless of the attack, matching the $q^2 \leq \sigma^2_{\mathcal{H}}$ regime of Theorem~\ref{thm:scalar-robustness}.
Interestingly, the transition between collapse and saturation is sharp: a 3- to 4-iteration window separates chance accuracy from the saturated regime.

We conclude that $N_{\mathrm{iter}}$ admits a well-defined operating point at around $N_{\mathrm{iter}} = 10$ in the tested configuration, above which extra iterations only inflate communication cost without improving robustness.

\begin{takeaway}
    \textbf{Takeaway:} The cost/robustness tradeoff is governed by a sharp threshold on $N_{\mathrm{iter}}$; provisioning above it is strictly wasteful.
\end{takeaway}

\begin{figure}[ht]
    \begin{minipage}{\columnwidth}
    \centering
    \includegraphics[width=0.85\columnwidth]{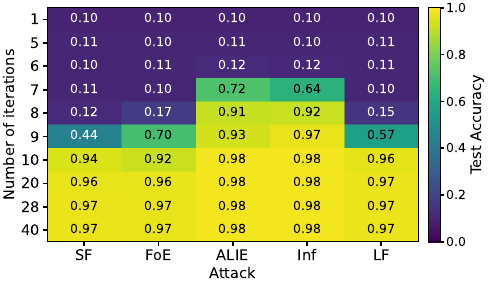}
    \end{minipage}
    \caption{Heatmap of test accuracy as a function of the maximum number of iterations of \Giskard{}'s binary-search median, on five Byzantine attacks on MNIST ($n=100$ including $f=25$) in a non-\iid setup ($\alpha = 1.0$).}
    \label{fig:acc_grid_cifar10_cnn_cifar_iteration}
\end{figure}

\section{Related Work}
\label{sec:related}

\paragraph{\textbf{Secure Aggregation}}

The challenge of achieving secure aggregation in a federated learning context arose shortly after the formalization of the latter, with the SecAgg protocol~\cite{secagg}. This protocol, which assumes honest-but-curious behavior, allows a server to compute the aggregated model from clients' contributions without being able to discern individual contributions by obscuring them with randomly generated correlated masks. Building on this foundation, a whole line of work has emerged aimed at reducing communication costs, notably with SecAgg+~\cite{secaggplus} and LightSecAgg~\cite{so2022lightsecagg}.

However, these protocols assume the federator server follows the prescribed protocol: a malicious server could instead distribute malformed shares and prevent clients from correctly masking their updates.
ELSA~\cite{elsa2023}, AION~\cite{aion}, and AlphaFL~\cite{cryptoeprint:2025/1289} address this by building on verifiable secret sharing (VSS), which lets clients detect malformed shares, at the cost of a multi-server deployment.
ELSA and AlphaFL adopt a two-server configuration with one honest server, while AION uses an $n$-aggregator setup under an honest-supermajority assumption, closer to \Giskard's threat model. 

These protocols implement plain averaging and offer no robustness against poisoning attacks.
A follow-up line extends secure aggregation with input-validation mechanisms, ACORN~\cite{acorn}, RoFL~\cite{lycklama2023rofl}, EIFFeL~\cite{eiffel}, and AION, but the guarantees remain heuristic, typically reducing to norm bounds on client updates.

\paragraph{\textbf{General-Purpose MPC systems}}
Secure multi-party computation (MPC) allows $n$ parties to jointly compute an arbitrary function using their private inputs, without revealing anything beyond the output. Foundational schemes include Yao's garbled circuits~\cite{yao86} for two parties, and BGW~\cite{bgw88} using Shamir's secret sharing for a group of arbitrary size with an honest majority. Today, specialized frameworks exist for different system sizes: replicated secret sharing and garbled circuits for 2--4 parties, and SPDZ~\cite{spdz} and MASCOT~\cite{mascot} for cases involving a dishonest majority. In all these schemes, evaluating additions is free, primarily due to the linear homomorphism of the secret-sharing scheme, but multiplications require additional communication, notably through the use of Beaver triples. This cost increases significantly when moving from an honest-but-curious threat model to a malicious one. Multiplication gates are the backbone of many arbitrary function implementations, such as comparison between a secret value and a constant or between two secrets. This makes implementing robust statistical functions very costly, as they require comparison and sorting gates. Furthermore, traditional MPC approaches require all-to-all communication, with per-party communication cost scaling at least linearly with the system size.

Scalable MPC works~\cite{SecureShuffling,millionsMPC} addresses the scalability issue by distributing computation across small committees of size $O(\log n)$ to achieve polylogarithmic cost. These results remain theoretical and have not been applied to distributed learning. Our work builds on such approaches by designing a hierarchical aggregation tree for median search, enabling confidential robust aggregation in a decentralized topology with sub-linear per-node communication cost.

\paragraph{\textbf{Byzantine-Robust and Privacy Preserving Decentralized Learning Systems}}

Early work on defending federated learning against model poisoning while preserving privacy combined independent components for each goal.
Biscotti~\cite{shayan_biscotti_2021}, for instance, builds on a blockchain substrate and pairs Multi-Krum aggregation and DP noise for robustness with secure aggregation for privacy.
However, its Multi-Krum filtering step runs in cleartext.

A subsequent line of work focuses on combining robust aggregation with differential privacy.
\citet{allouah_privacy-robustness-utility_2023} show that achieving a meaningful DP guarantee from local DP on top of $(f,\kappa)$-robust aggregation sharply degrades model utility.
CafCor~\cite{allouahcafcor} resolves this trade-off by replacing local DP with a correlated-noise mechanism across parties, and introducing a new $(f,\kappa)$-robust aggregator built on a covariance-bound-agnostic filter. It nonetheless operates in a classical FL setting with an honest-but-curious aggregation server.

A second line of work directly implements robust aggregation rules, either through homomorphic encryption or via secret sharing.
Sable~\cite{choffrut_sable}, for instance, introduces somewhat-homomorphic-encryption operators that realize the trimmed-mean aggregator directly, a known robust aggregation function.
This keeps client-side communication low, but still assumes a single honest-but-curious server whose workload grows rapidly with the number of clients.
AlphaFL~\cite{cryptoeprint:2025/1289} implements aggregation with input validation inside the UC-secure SPDZ2k framework in a two-server configuration, tolerating malicious clients and servers but relying on heuristic input validation rather than formally robust aggregation. The two-server assumption remain incompatible with fully decentralized settings.
BREA~\cite{so2020byzantine} secret-shares each client's quantized update via verifiable secret sharing; clients then locally compute pairwise distances over the shares and forward the results to the server, which reconstructs the distance matrix and runs a multi-Krum-style selection. Per-client communication and computation scale as $O(N^2)$ in the network size, and the server learns the full pairwise distance matrix.
SHARE~\cite{velicheti2021secure} adopts a clustering-based hybrid: clients are randomly clustered each round, pairwise-key secure aggregation runs within each cluster, and a robust aggregator is applied to the cluster means. Cluster size controls a trade-off between privacy granularity and robust-aggregator tolerance, but the server observes cluster means in the clear under an honest-but-curious assumption

Removing the server assumption shifts robust aggregation entirely into the MPC layer.
SecureDL~\cite{ghavamipour_privacy-preserving_2024} implements a cosine-similarity model filtering mechanism on top of MASCOT, a fully connected MPC framework in which every participant contributes an input and participates in the computation via secret sharing, tolerating malicious nodes.
This all-to-all topology scales poorly: communication costs grow at least quadratically in the number of participants.
The reliance on cosine-similarity filtering, which offers no formal robustness guarantee, follows directly from this cost structure, as instantiating classical robust aggregators such as the median or trimmed mean would be prohibitively expensive.
Franzese et al.~\cite{francez} formalize robust serverless collaborative learning using an all-to-committee MPC setup, instantiating RSA~\cite{li_rsa_2019}, FLTrust~\cite{caoFLTrustByzantinerobustFederated2021}, and CenteredClipping~\cite{cclip}. Their single-layer topology concentrates work on a small committee, yielding per-node costs close to multi-server approaches; the construction also leaves the underlying VSS scheme and a formally robust aggregator unspecified. \Giskard{} addresses these gaps with a full protocol description, security proofs, and a formally robust aggregator that scales to a decentralized setting.

\section{Conclusion}
\label{sec:conclusion}

In this work, we presented $\Giskard$, a system attempting to solve the per-party communication overhead issue while tackling the challenging threat model combining confidentiality and model poisoning attacks. We provide a fully decentralized learning system that has theoretical robustness guarantees against model poisoning attacks while keeping confidentiality guarantees. We provide a security proof for our system along with a communication cost analysis with practical implementation cost, and finally show that our robust aggregator is competitive with state-of-the-art approaches. As future work, we plan to explore computational variants based on cryptographic assumptions to further reduce communication costs.

\begin{acks}
This work was supported by the French government managed by the Agence Nationale de la Recherche (ANR) through France 2030 program with the reference ANR-23-PEIA-005 (REDEEM project).
\end{acks}

\bibliographystyle{ACM-Reference-Format}
\bibliography{bibliography.bib}

\appendix

\section{Open Science}

We provide an artifact available in the following anonymized code repository using \href{https://anonymous.4open.science}{anonymous.4open.science}: \href{https://anonymous.4open.science/r/Giskard-ByzRobustConfidentialDL-B3E9}{Giskard Repository}\footnote{\url{https://anonymous.4open.science/r/Giskard-ByzRobustConfidentialDL-B3E9}}.

This artifact contains three elements: 
\begin{itemize}
  \item A fork from the ByzFL repository to run the Collaborative Learning experiments unde model poisoning attacks.
  \item A Rust PoC of the procotol using the arkworks library to run the communication cost experiment and compute estimated latency
  \item Additional python module to generate remaining figures
\end{itemize}

\section{Generative AI Usage}
While writing this manuscript, AI tools (Claude from Anthropic, Gemini from Google) were used for the following purposes: editorial review and sentence rephrasing.

Apart from this, the intellectual contributions of this paper, namely the protocol design, the integration with the threat model, the security proofs, and the complexity analysis, were carried out without the assistance of an LLM. The authors take full responsibility for the accuracy, originality, and integrity of all content in the paper (text, figures, experimental data, and references).

\begin{table}[h]
\centering
\caption{Notation.}
\label{tab:notation}
\small
\begin{tabular}{@{}ll@{}}
\toprule
\textbf{Symbol} & \textbf{Meaning} \\
\midrule
\multicolumn{2}{@{}l}{\textit{System and threat model}} \\
$n, \mathcal{P}, \mathcal{H}, \mathcal{A}$ & Parties; honest / adversarial sets \\
$f, \epsilon$ & Corruption count, $f = (1/4-\epsilon)n$ \\
\addlinespace
\multicolumn{2}{@{}l}{\textit{Tree and committees}} \\
$\mathcal{T}, k, L, \ell$ & Tree; branching, depth, level \\
$m, \tau$ & Committee size; threshold $\tau<m/4$ \\
$\mathcal{C}_{\ell,c}$ & Committee $c$ at level $\ell$ \\
$\mathcal{C}_j$ & Leaf $j$'s base committee \\
$\mathcal{C}_{\mathrm{ch}}, \mathcal{C}_{\mathrm{pa}}, \mathcal{C}_L$ & Child, parent, root committee \\
$P_i, P'_r$ & Parties in child / parent \\
\addlinespace
\multicolumn{2}{@{}l}{\textit{Median search}} \\
$D, u, q$ & Dimension, value range, precision \\
$\mathbf{x}_j, \mathbf{p}_t$ & Leaf input; pivot at iter.\ $t$ \\
$\mathsf{Left}, \mathsf{Right}$ & Binary-search bounds \\
$\mathbf{b}_j$ & Bits, $b_{j,d}=\mathbf{1}[x_{j,d}<p_{t,d}]$ \\
$\mathsf{count}_d, \hat{M}$ & Aggregated count; median \\
$N_{\mathrm{iter}}$ & Median Search iterations \\
\addlinespace
\multicolumn{2}{@{}l}{\textit{Secret sharing}} \\
$[\cdot], \sigma_i, \sigma'_r$ & Share: generic, child, parent \\
$f_d, \psi_i$ & Sharing / resharing polynomial \\
$\alpha_i, \omega_r, \lambda_i$ & Eval.\ points; Lagrange coeff. \\
$F_{\textsc{X}}$ & Ideal functionality for $\textsc{X}$ \\
\bottomrule
\end{tabular}
\end{table}

\section{Giskard Sub-Protocols description and Security Analysis}
\label{app:it-giskard}

This appendix details the sub-protocols that instantiate Giskard in the
information-theoretic setting, together with the security analysis. The
underlying primitives are from the BGW framework~\cite{bgwSubshare}; we
recall only what is needed.

\subsection{Background: BGW Primitives}
\label{app:bgw-primitives}

\paragraph{\textsc{VSS-Share}.}
The protocol follows Shamir secret sharing, augmented with a bivariate
commitment to enable verification. A dealer holding secret $s$ samples a
random bivariate polynomial $B(x, y) \in \mathbb{F}_p[x, y]$ of degree $t-1$
in each variable, under the constraint $B(0, y) = q(y)$ where $q$ is the
underlying univariate sharing polynomial with $q(0) = s$. The dealer sends
each party $P_i$ the two univariate polynomials
$f_i(x) = B(x, i)$ and $g_i(y) = B(i, y)$.
Every pair $(P_i, P_j)$ then performs a pairwise consistency exchange: $P_i$
sends $f_i(j)$ and $g_i(j)$ to $P_j$, which checks
$f_i(j) \stackrel{?}{=} g_j(i)$ and $g_i(j) \stackrel{?}{=} f_j(i)$.
A failed check triggers a broadcast complaint. For each complaint from $P_j$,
the dealer may either broadcast the full polynomial pair $(f_j, g_j)$ to
resolve it publicly, or remain silent. Parties whose complaints were
resolved broadcast \textsc{Consistent}. Each $P_i$ accepts $f_i(0) = B(0, i)$
as its share of $s$ if and only if at least $n - t$ \textsc{Consistent}
messages are received.

\paragraph{\textsc{Multiply}.}
Given degree-$(t-1)$ sharings of $a$ and $b$, the protocol outputs a
degree-$(t-1)$ sharing of $ab$ in two steps:
\begin{enumerate}
\item Each party locally computes the product of its input shares, obtaining a
point on a degree-$2(t-1)$ polynomial that hides $ab$.
\item Each party subshares (i) its two original input shares and (ii) its
local product, each via a fresh random degree-$(t-1)$ polynomial. The
consistency of the input subshares certifies that the product subshare
indeed corresponds to the product of those inputs. Once verified, parties
locally compute a public linear combination of the product subshares to
recover a fresh degree-$(t-1)$ sharing of $ab$.
\end{enumerate}
Full details appear in Protocol~A.1 of~\cite{bgwSubshare}.

\subsection{Giskard Sub-Protocols}
\label{app:it-subprotocols}

We present here the sub-protocols used in \Giskard{}: \textsc{InputShare}, \Reshare{}, and \textsc{Broadcast Down}. The first two are invoked during the aggregation phase; \textsc{Broadcast Down} propagates the value held by the root committee down the committee tree, edge by edge, until it reaches the leaves.

\begin{myprotocol}[label=alg:inputshare-bgw]{$\textsc{InputShare}$}

\textbf{Input.}
Each leaf $j$ holds $\mathbf{x}_j \in [u]^D$ and the public pivot $\mathbf{p}_t \in \mathbb{R}^D$.
BuildTree has assigned $j$ to a base committee $\mathcal{C}_j$ of size $m$ with threshold $\tau < m/4$.

\textbf{Output.}
Each $P_i \in \mathcal{C}_j$ holds $[\sigma_d]_i$ for every $d \in [D]$, where $\sigma_d = \sum_j \tilde{b}_{j,d}$.

\vspace{3pt}\hrule\vspace{3pt}

\textbf{1. Leaf-side sharing.}
For each $d \in [D]$, leaf $j$ computes $b_{j,d} = \mathbf{1}[x_{j,d} < p_{t,d}]$ and invokes $\VSSShare(b_{j,d})$ toward $\mathcal{C}_j$.

\textbf{2. VSS verification.}
The committee completes VSS verification for each $[b_{j,d}]$.
If verification rejects, $[b_{j,d}] \gets [0]$.

\textbf{3. Bit-domain check.}
The committee jointly runs $\Multiply([b_{j,d}], [1 - b_{j,d}])$ and opens the result via $\VSSReconst$.
If the opened value is nonzero, $[b_{j,d}] \gets [0]$.

\textbf{4. Local aggregation.}
Each $P_i \in \mathcal{C}_j$ locally computes
\[
[\sigma_d]_i \;\gets\; \sum_{j \,:\, P_i \in \mathcal{C}_j} [b_{j,d}]_i
\quad \text{for every } d \in [D].
\]

\end{myprotocol}

\begin{myprotocol}[label=alg:reshare-bgw]{$\textsc{Reshare}$}

\textbf{Input.}
Each $P_i \in \mathcal{C}_{\mathrm{ch}}$ holds $\sigma_i = f(\alpha_i)$, where $f$ is a degree-$\tau$ polynomial.
Both committees have size $m$ with threshold $\tau < m/4$; $\{\alpha_i\}$ and $\{\omega_r\}$ are the public evaluation points of $\mathcal{C}_{\mathrm{ch}}$ and $\mathcal{C}_{\mathrm{pa}}$.

\textbf{Output.}
Each $P'_r \in \mathcal{C}_{\mathrm{pa}}$ holds $\sigma'_r = f'(\omega_r)$, where $f'$ is a fresh degree-$\tau$ polynomial with $f'(0) = f(0)$.

\vspace{3pt}\hrule\vspace{3pt}

\textbf{1. Sub-sharing.}
Each $P_i \in \mathcal{C}_{\mathrm{ch}}$ invokes $\VSSSubShare(\sigma_i)$ toward $\mathcal{C}_{\mathrm{pa}}$.
Each $P'_r \in \mathcal{C}_{\mathrm{pa}}$ receives $g_i(\omega_r)$, where $g_i$ is a degree-$\tau$ polynomial with $g_i(0) = \sigma_i$.

\textbf{2. Local recombination.}
Let $\{\lambda_i\}_{i \in \mathcal{C}_{\mathrm{ch}}}$ be the Lagrange coefficients interpolating $f(0)$ from $\{f(\alpha_i)\}$.
Each $P'_r \in \mathcal{C}_{\mathrm{pa}}$ locally computes
\[
\sigma'_r \;\gets\; \sum_{i \in \mathcal{C}_{\mathrm{ch}}} \lambda_i \cdot g_i(\omega_r).
\]

\end{myprotocol}

\begin{myprotocol}[label=alg:downward-broadcast]{$\textsc{Broadcast Down (in the Tree)}$}
\textbf{Input.}
Each $P'_r \in \mathcal{C}_{\mathrm{pa}}$ holds the same public value 
$v \in \mathbb{R}$ (e.g., a reconstructed pivot coordinate).
Both committees have size $m$ with threshold $\tau < m/4$.
\textbf{Output.}
Each honest $P_i \in \mathcal{C}_{\mathrm{ch}}$ holds $v$.
\vspace{3pt}\hrule\vspace{3pt}
\textbf{1. Send.}
Each $P'_r \in \mathcal{C}_{\mathrm{pa}}$ sends $v$ to every $P_i \in \mathcal{C}_{\mathrm{ch}}$ over authenticated channels.
\textbf{2. Majority vote.}
Each $P_i \in \mathcal{C}_{\mathrm{ch}}$ collects the received values $\{v_r\}_{r \in \mathcal{C}_{\mathrm{pa}}}$ and outputs the value with multiplicity $> m/2$.
\end{myprotocol}

\subsection{Proof of Theorem~\ref{thm:treeConstruction}}
\label{app:treeconstruction-proof}

\begin{proof}
Corruption is static and fixed before $\mathsf{seed}$ is sampled.
Since $\mathsf{seed}$ is unpredictable, each committee is a uniformly random $m$-subset of $[n]$, so its corruption count follows $\mathrm{Hyp}(n, f, m)$.
By Hoeffding's reduction~\cite{Hoeffding63}, upper-tail bounds for this hypergeometric are majorized by those of $X \sim \mathrm{Bin}(m, f/n)$.
We bound $\Pr[X \geq m/4]$ via a multiplicative Chernoff bound:
\begin{align*}
\Pr[X \ge m/4] \;\le\; \exp\!\left( -\left(1 - \frac{3n}{4(n-f)}\right)^{\!2} \cdot \frac{m(n-f)}{2n} \right).
\end{align*}
Substituting $f/n = 1/4 - \epsilon$,
\begin{align*}
\Pr[X \ge m/4]
&\leq \exp\!\left(-\frac{16\epsilon^2}{(3+4\epsilon)^2} \cdot \frac{m(3/4 + \epsilon)}{2}\right) \\
&= \exp\!\left(-\frac{2\epsilon^2 m}{3 + 4\epsilon}\right).
\end{align*}
The total number of committees is $O(n)$, so a union bound gives:
\begin{align*}
\Pr[\text{failure}] \;\leq\; n \cdot \exp\!\left(-\frac{2\epsilon^2 m}{3 + 4\epsilon}\right).
\end{align*}
For $\Pr[\text{failure}] \leq n^{-c}$ with any desired constant $c > 0$, taking logarithms:
\begin{align*}
\ln n - \frac{2\epsilon^2 m}{3 + 4\epsilon} &\leq -c \ln n \\
m &\geq \frac{(1+c)(3+4\epsilon)}{2\epsilon^2}\ln n.
\end{align*}
Since $\epsilon \in (0, 1/4)$ is fixed, this gives $m = O(\epsilon^{-2} \ln n)$.
\end{proof}

\subsection{UC security of \textsc{InputShare}}
\label{app:inputshare-proof}

The security of \textsc{InputShare} follows from the fact that it is a particular application of BGW.  A detailed security proof  this scheme can be found in \cite[Corollary 8.2]{bgwSubshare}. We adapt the proof to our protocol below. 

\begin{proof}
We provide the ideal functionality $F_\textsc{InputShare}$ (Fig.~\ref{func:f_inputshare}) and the simulator $\mathcal{S}_{\textsc{InputShare}}$ (Fig.~\ref{sim:s_inputshare}).

The adversary corrupts a set $\mathcal{A}_{\mathrm{leaf}}$ of leaves and a set $I \subseteq \mathcal{C}_{1,c}$ of committee parties with $|I| \leq \tau$.

We prove indistinguishability of the adversary's view between the $(F_{\VSSShare}, F_{\Multiply})$-hybrid execution of $\Pi_{\textsc{InputShare}}$ and the ideal execution using $F_{\textsc{InputShare}}$ and $\mathcal{S}_{\textsc{InputShare}}$.

The adversary's view consists of three parts: the shares received from $F_{\VSSShare}$, the bit-check outputs from the $F_{\textsc{Multiply}}$ calls, and the aggregated shares.

\begin{itemize}
  \item $F_{\VSSShare}$ produces shares of each secret via a uniformly random degree-$\tau$ polynomial.
  Since $|I| \leq \tau$, the corrupt committee parties receive at most $\tau$ evaluations, which by the information-theoretic privacy of Shamir secret sharing are uniformly distributed over $\mathbb{F}^{|I|}$ and independent of the underlying secret.
  This holds whether the shared value is the honest input bit $\mathbf{1}[x_{j,d} < p_{t,d}]$ or the simulator's dummy input $0$.

  \item The same property applies to the bit-verification call to $F_{\textsc{Multiply}}$.
  For every honest leaf, $b_{j,d}(1 - b_{j,d}) = 0$ in $\mathbb{F}$, so both executions open to $0$ (the simulator uses dummy input $0$, and the real protocol shares a genuine bit).
  For every corrupt leaf, both worlds open $b^*_{j,d}(1 - b^*_{j,d})$, which is $0$ iff $b^*_{j,d} \in \{0,1\}$; the simulator and $F_{\textsc{InputShare}}$ zero out the contribution in exactly the same cases.
  The corrupt shares of $[\delta_{j,d}]$ are therefore uniformly distributed and independent of the input in both worlds.

  \item In both executions, each corrupt party $P_k \in I$ holds $\sigma_{d,k} = \sum_j s^j_{d,k}$.
  In the hybrid execution this is the evaluation at $\alpha_k$ of the implicit aggregate polynomial $f_d^{\mathrm{hyb}} = \sum_j f_d^{j}$, which has degree $\tau$ and constant term $\sigma_d$.
  In the ideal execution, $F_{\textsc{InputShare}}$ samples $f_d$ uniformly among degree-$\tau$ polynomials with $f_d(0) = \sigma_d$; the simulator's correction $g_d$ (with $g_d(0) = \Delta_d$ and $g_d(\alpha_k) = 0$ for $k \in I$) reconciles the corrupt evaluations, so that in both worlds the joint distribution of $\bigl(\{s^j_{d,k}\}_{j,\, k \in I},\, \{f_d(\alpha_k)\}_{k \in I}\bigr)$ is identical.
  Conditioned on $|I| \leq \tau$ evaluations of a uniformly random degree-$\tau$ polynomial with fixed constant term, the polynomial is uniquely determined on those points and uniform on the rest.
\end{itemize}

Each component is identically distributed in the hybrid and ideal executions, so the adversary's joint view is identically distributed.

Security of $\Pi_{\textsc{InputShare}}$ in the $(F_{\VSSShare}, F_{\Multiply})$-hybrid model follows. By the composition theorem of~\citet{canettiUC} and \cite{klrStraightLineUC} applied to the UC-realizations
of $F_{\VSSShare}$ and $F_{\Multiply}$, $\Pi_{\textsc{InputShare}}$ UC-realizes
$F_{\textsc{InputShare}}$ against the concrete sub-protocols.
\end{proof}

\subsubsection{Ideal functionality $F_{\textsc{InputShare}}$}
\label{app:inputshare-functionality}
\begin{functionality}{$F_{\textsc{InputShare}}$}
\label{func:f_inputshare}

\textbf{Parameters.}
Dimension $D$; leaf set $[n]$; base committee $\mathcal{C}_{1,c}$ of size $m$; corruption sets $\mathcal{A}_{\mathrm{leaf}} \subseteq [n]$ and $I \subseteq \mathcal{C}_{1,c}$ with $|I| \leq \tau$; pivot $\mathbf{p}_t \in \mathbb{R}^D$.

\medskip
\textbf{Interface.}
\begin{enumerate}[leftmargin=1.8em, itemsep=4pt]

\item \textsc{Honest Input.}\;
For each honest leaf $j \in [n] \setminus \mathcal{A}_{\mathrm{leaf}}$, receive $(\textsc{Input}, \mathbf{x}_j)$ from $j$ and notify $\mathcal{S}$ with $(\textsc{Input}, j)$.

\item \textsc{Adversarial Input.}\;
Receive $(\textsc{Bits}, \{b^*_{j,d}\}_{j \in \mathcal{A}_{\mathrm{leaf}},\, d \in [D]})$ from $\mathcal{S}$.
For each $(j, d)$, if $b^*_{j,d} \notin \{0,1\}$, set $b^*_{j,d} \gets 0$.

\item \textsc{Output.}\;
For each $d \in [D]$, compute
\[
\sigma_d \gets
\sum_{j \notin \mathcal{A}_{\mathrm{leaf}}} \mathbf{1}[x_{j,d} < p_{t,d}]
+ \sum_{j \in \mathcal{A}_{\mathrm{leaf}}} b^*_{j,d}.
\]
Send $(\textsc{Aggregate}, \{\sigma_d\}_{d \in [D]})$ to $\mathcal{S}$.
Upon $(\textsc{Deliver})$ from $\mathcal{S}$: for each $d$, sample a uniformly random degree-$\tau$ polynomial $f_d$ with $f_d(0) = \sigma_d$, and deliver $f_d(\alpha_i)$ to each $P_i \in \mathcal{C}_{1,c}$.
\end{enumerate}
\end{functionality}

\subsubsection{Simulator $\mathcal{S}_{\textsc{InputShare}}$}
\label{app:inputshare-simulator}

\begin{simulator}{$\mathcal{S}_{\textsc{InputShare}}$}
\label{sim:s_inputshare}

\textbf{Inputs.}\;
Adversarial bits $\{b^*_{j,d}\}_{j \in \mathcal{A}_{\mathrm{leaf}},\, d \in [D]}$; corrupt committee set $I \subseteq \mathcal{C}_{1,c}$ with $|I| \leq \tau$.

\medskip
\textbf{Simulation.}
\begin{enumerate}[leftmargin=1.8em, itemsep=4pt]

\item \emph{VSS sharing.}\;
For each honest leaf $j \notin \mathcal{A}_{\mathrm{leaf}}$ and each $d \in [D]$, invoke $F_{\VSSShare}$ on dummy input $0$; record the resulting share $s^j_{d,k}$ for each $P_k \in I$.
For each corrupt leaf $j \in \mathcal{A}_{\mathrm{leaf}}$ and each $d$, invoke $F_{\VSSShare}$ on input $b^*_{j,d}$ and record the corresponding shares.

\item \emph{Bit check.}\;
For each $(j, d)$, invoke $F_{\Multiply}$ on the shared values $([b_{j,d}], [1 - b_{j,d}])$ to obtain a sharing $[\delta_{j,d}]$; record each corrupt party's share $\delta^j_{d,k}$ for $k \in I$.
Simulate $\VSSReconst([\delta_{j,d}])$: for honest leaves the opened value is $0$; for corrupt leaves the opened value is $b^*_{j,d}(1 - b^*_{j,d})$.
If the opened value is nonzero, treat leaf $j$'s contribution for coordinate $d$ as zero in what follows.

\item \emph{Extraction.}\;
Send $(\textsc{Bits}, \{b^*_{j,d}\}_{j \in \mathcal{A}_{\mathrm{leaf}},\, d \in [D]})$ to $F_{\textsc{InputShare}}$ and receive $\{\sigma_d\}_{d \in [D]}$.

\item \emph{Corrected Aggregation.}\;
For each $d \in [D]$, let $\Delta_d = \sigma_d - \sum_{j \in \mathcal{A}_{\mathrm{leaf}}} b^*_{j,d}$ (accounting for bit-check zero-outs).
Sample a uniformly random degree-$\tau$ polynomial $g_d$ with $g_d(0) = \Delta_d$ and $g_d(\alpha_k) = 0$ for all $P_k \in I$.
For each $P_k \in I$, record $\sigma_{d,k} \gets \sum_j s^j_{d,k}$ as the corrupt party's final aggregate share (which, by construction of $g_d$, is consistent with a global polynomial $f_d$ of constant term $\sigma_d$).

\item \emph{Deliver.}\;
Send $(\textsc{Deliver})$ to $F_{\textsc{InputShare}}$.
\end{enumerate}
\end{simulator}

\subsection{Proof of Theorem~\ref{thm:scalar-robustness}}
\label{app:robustness-proof}

\begin{proof}
Let $\sigma_{\mathcal{H}}^2 := \tfrac{1}{|\mathcal{H}|}\sum_{j\in \mathcal{H}}(x_j-\bar{x}_{\mathcal{H}})^2$ and set $k_1 = \lfloor n/2\rfloor + 1 - f$, $k_2 = \lfloor n/2\rfloor + 1$.
By the output bracket of Lemma~\ref{lem:quantile-bracket}, write $\hat{M} = v + \delta$ with $v \in [h_{(k_1)}, h_{(k_2)}]$ and $|\delta| \leq q/2$.
For any $v \in [h_{(k_1)}, h_{(k_2)}]$,
\begin{align*}
|\{j\in \mathcal{H}: x_j \leq v\}| &\;\geq\; k_1, \\
|\{j\in \mathcal{H}: x_j \geq v\}| &\;\geq\; |\mathcal{H}| - k_2 + 1 \;=\; \lceil n/2\rceil - f.
\end{align*}
If $v \geq \bar{x}_{\mathcal{H}}$, each of the $\lceil n/2\rceil - f$ values with $x_j \geq v \geq \bar{x}_{\mathcal{H}}$ contributes $(x_j - \bar{x}_{\mathcal{H}})^2 \geq (v - \bar{x}_{\mathcal{H}})^2$, so
\begin{equation*}
(\lceil n/2\rceil - f)(v - \bar{x}_{\mathcal{H}})^2
\;\leq\; \sum_{j\in \mathcal{H}}(x_j - \bar{x}_{\mathcal{H}})^2
\;=\; |\mathcal{H}|\,\sigma_{\mathcal{H}}^2.
\end{equation*}
The case $v \leq \bar{x}_{\mathcal{H}}$ is symmetric via $k_1 \geq \lceil n/2\rceil - f$, which holds under $f < n/4$ for both parities of $n$.
Hence
\begin{equation}
(v - \bar{x}_{\mathcal{H}})^2 \;\leq\; \kappa_0\,\sigma_{\mathcal{H}}^2,
\quad
\kappa_0 := \frac{n-f}{\lceil n/2\rceil - f}.
\label{eq:chebyshev-bound}
\end{equation}
By $(a+b)^2 \leq 2a^2 + 2b^2$ and \eqref{eq:chebyshev-bound},
\begin{equation*}
(\hat{M} - \bar{x}_{\mathcal{H}})^2
\;\leq\; 2(v - \bar{x}_{\mathcal{H}})^2 + 2\delta^2
\;\leq\; 2\kappa_0\,\sigma_{\mathcal{H}}^2 + \tfrac{q^2}{2}.
\end{equation*}
Under $q^2 \leq \sigma_{\mathcal{H}}^2$,
\begin{equation*}
(\hat{M} - \bar{x}_{\mathcal{H}})^2
\;\leq\; \bigl(2\kappa_0 + \tfrac{1}{2}\bigr)\sigma_{\mathcal{H}}^2
\;=\; \kappa\,\sigma_{\mathcal{H}}^2. \qedhere
\end{equation*}
\end{proof}

\subsection{Proof of Theorem~\ref{thm:comm-bgw}}
\label{app:comm-proof}

\begin{proof}
Part (i) follows by combining the \textsc{InputShare} cost at level-1
committees, $O((n/k^L) \cdot m^2)$, with the \textsc{Reshare} cost at
intermediate committees, $O(km^2)$, giving
\[
  \Delta_{\max} = O\!\left(\max\!\left(\tfrac{n \cdot m^{2}}{k^L},\; km^2\right)\right).
\]
Balancing $(n/k^L) \cdot m^{2} = km^2$ gives $k^{L+1} = n$, and with $k = O(1)$
and $m = O(\log n)$ we obtain $\Delta_{\max} = O(\log^2 n)$.

Part (ii): since each party is assigned to at most one committee per level, it
participates in at most $L = O(\log n)$ committees per iteration. Within a
committee, a party pays $\Delta_{\max}$ per iteration per coordinate, giving
\[
  \Delta_{\max}^{\mathsf{eff}} = \Delta_{\mathrm{leaf}} + d \cdot L \cdot N_{\mathrm{iter}} \cdot \Delta_{\max},
\]
where $\Delta_{\mathrm{leaf}} = O(dN_{\mathrm{iter}}\,m^2)$ covers one \textsc{VSS-Share} boolean verification \textsc{Multiply} per coordinate. Summing,
$d \cdot N_{\mathrm{iter}} \cdot L \cdot \Delta_{\max} = d \cdot N_{\mathrm{iter}} \cdot O(\log n) \cdot O(\log^2 n) = O(dR\log^{3} n)$,
and the leaf term is a factor $\log n$ smaller and absorbed.

Part (iii): \textsc{InputShare} costs $O(d\,nm^2)$ total (one \textsc{VSS-Share}
per leaf per coordinate), bit verification costs $O(d\,nm^3)$ total (one
\textsc{Multiply} per leaf per coordinate), and \textsc{Reshare} costs
$O(d\,m^3)$ per instance over $\sum_{h=0}^{L-1} k^{L-h} = O(n)$ instances (since
$k = O(1)$), giving $O(nm^{3})$. Summing over $N_{\mathrm{iter}}$ iterations and applying
$m = O(\log n)$ yields $O(dN_{\mathrm{iter}}n\log^{3} n)$.
\end{proof}

\end{document}